\documentclass[11pt]{article}

\usepackage{amsfonts,latexsym,amsthm,amssymb,amsmath,amscd,euscript,tikz,mathtools}
\usepackage{framed}
\usepackage{authblk}
\usepackage[margin=1in]{geometry}
\usepackage{color}
\usepackage[colorlinks=true,citecolor=blue,linkcolor=blue]{hyperref}
\usepackage{hyperref}
\hypersetup{colorlinks=true,citecolor=blue,urlcolor =black,linkbordercolor={1 0 0}}
\usepackage{enumitem}
\usepackage{textcomp}
\usepackage{physics}
\usepackage{mathrsfs}
\usepackage{dsfont}
\usepackage{algorithm} %
\usepackage{algpseudocode} %
\usepackage{titlesec}
\usepackage{tikz-cd}
\usepackage{caption}
\usepackage[capitalize]{cleveref}
\RemoveFromHook{label}[firstaid/cleveref] 
\usepackage{stmaryrd}
\usepackage{graphicx}
\usepackage[normalem]{ulem}
\usetikzlibrary{positioning,chains,fit,shapes,calc}

\allowdisplaybreaks[1]

\newtheorem{theorem}{Theorem}[section]
\newtheorem{proposition}[theorem]{Proposition}
\newtheorem{lemma}[theorem]{Lemma}

\newtheorem{corollary}[theorem]{Corollary}

\newtheorem{definition}[theorem]{Definition}

\theoremstyle{remark}

\def\CC{\mathbb{C}}
\def\FF{\mathbb{F}}

\def\ZZ{\mathbb{Z}}

\def\calA{\mathcal{A}}

\def\calC{\mathcal{C}}
\def\calD{\mathcal{D}}
\def\calE{\mathcal{E}}
\def\calL{\mathcal{L}}
\def\calN{\mathcal{N}}
\def\calO{\mathcal{O}}
\def\calQ{\mathcal{Q}}

\def\frakm{\mathfrak{m}}

\newcommand{\wt}{\operatorname{wt}}
\newcommand{\dist}{\operatorname{dist}}

\newcommand{\rowsp}{\operatorname{rowsp}}

\newcommand{\CSS}{\operatorname{CSS}}
\newcommand{\punct}{\operatorname{Punct}}
\newcommand{\short}{\operatorname{Short}}
\newcommand{\supp}{\operatorname{supp}}
\newcommand{\ord}{\operatorname{ord}}
\newcommand{\Span}{\operatorname{span}}
\newcommand{\evmap}{\operatorname{ev}}

\makeatletter
\newcommand\footnoteref[1]{\protected@xdef\@thefnmark{\ref{#1}}\@footnotemark}
\makeatother

\definecolor{revisionbrown}{RGB}{160,110,0}
\newcommand{\revisioncolor}{%
  \color{revisionbrown}%
  \hypersetup{linkcolor=revisionbrown,citecolor=revisionbrown,urlcolor=revisionbrown}%
}

\title{Asymptotically Good Quantum Codes with Addressable Transversal T Gates}

\author[1]{Tongyin Lin\thanks{12232812@mail.sustech.edu.cn}}
\author[1]{Bujiao Wu\thanks{wubujiao@iqasz.cn}}
\author[2]{Bin Cheng\thanks{bincheng@nus.edu.sg}}

\affil[1]{International Quantum Academy, Shenzhen 518048, China}
\affil[2]{Centre for Quantum Technologies, National University of Singapore, 17543, Singapore}

\begin{document}


\maketitle
\thispagestyle{empty}

\begin{abstract}
Designing quantum codes with both transversal non-Clifford gates and good error-correcting
parameters is an important goal in fault-tolerant quantum computation.
Here, we construct an explicit family of binary CSS codes with
constant rate and linear distance that admit fully addressable
transversal $T$ gates.  Specifically, any prescribed
tensor product of logical powers of $T$ is implemented by a tensor
product of physical powers of $T$, without any subsequent correction.  Our construction combines algebraic-geometry codes with suitable 
binary embedding to obtain generalized divisibility of binary codes.
This divisibility then enables fully addressable transversal $T$ gates.
Furthermore, we formulate the minimum binary embedding length problem as a
minimum-weight problem over an affine space and numerically improve the constants in the asymptotic rate
and relative-distance bounds.
\end{abstract}





\section{Introduction}
Implementing logical operations while preserving protection against physical errors is a central requirement of fault-tolerant quantum computation. 
Transversal gates provide a particularly simple mechanism for achieving this goal~\cite{gottesman1998theory}. 
A transversal operation acts on disjoint sets of physical qubits, each containing at most one qubit from any given code block. 
However, the Eastin--Knill theorem establishes that a quantum code capable of detecting arbitrary single-qubit errors cannot support a universal set of transversal logical gates~\cite{eastin2009restrictions}. Universal fault-tolerant computation therefore requires additional ingredients beyond transversal gates on a fixed code.

Despite this restriction, transversal non-Clifford gates remain valuable building blocks for universal fault-tolerant architectures. Of particular importance is the gate $T:=\operatorname{diag}(1,e^{\pi i/4})$, which, together with Clifford operations, forms a universal gate set, allowing arbitrary quantum unitaries to be approximated to any desired accuracy~\cite{boykin2000new}.
Quantum error correction codes admitting a transversal logical $T$ underlie important magic-state distillation protocols, which use Clifford operations and measurements to convert noisy resource states into higher-fidelity states suitable for implementing non-Clifford gates~\cite{bravyi2005universal,bravyi2012magic}. Transversal non-Clifford gates also play a key role in code-switching and gauge-fixing approaches, where fault-tolerant transitions between codes with complementary transversal gate sets enable universal logical computation~\cite{anderson2014fault,bombin2015gauge}.

The value of transversal logical gates also depends on the encoding efficiency and error-correction capabilities of the underlying codes. An $[[N,K,D]]$ quantum code encodes $K$ logical qubits into $N$ physical qubits and has minimum distance $D$. A family of codes is asymptotically good if both $K=\Theta(N)$ and $D=\Theta(N)$. Such families combine constant encoding overhead, $N/K=O(1)$, with the ability, under ideal recovery, to correct arbitrary errors affecting up to $\lfloor(D-1)/2\rfloor=\Theta(N)$ physical qubits. They therefore provide a fundamental benchmark for protecting increasingly large quantum systems without an increasing number of physical qubits per logical qubit.
Although the existence of asymptotically good quantum codes has been known since the early development of quantum error correction~\cite{calderbank1996good,steane1996multiple}, these coding parameters alone do not guarantee useful transversal logical gates. Requiring a prescribed transversal logical action imposes additional structure on the code, and retaining both a nonvanishing rate and a nonvanishing relative distance under this requirement is a separate construction problem. For codes encoding many logical qubits, it is also desirable to select which logical qubits receive a gate. Of particular interest here is achieving asymptotic goodness together with fully addressable transversal $T$ gates: a tensor product of single-qubit physical gates can implement $T$ on any chosen logical qubit and the identity on all others, with the code and encoding fixed.



Recent constructions have demonstrated that asymptotic goodness is compatible with transversal non-Clifford logical operations. Golowich and Guruswami~\cite{golowich2025asymptotically} and Nguyen~\cite{nguyen2025good} constructed asymptotically good binary quantum codes with transversal logical $CCZ$ gates. He et al.~\cite{he2025asymptotically} extended this capability to addressable $CCZ$ gates on selected triples of logical qubits. On fixed-dimensional qudits, Wills, Hsieh, and Yamasaki~\cite{wills2025constant} used asymptotically good codes with transversal non-Clifford gates to achieve constant-overhead distillation of qubit magic states. Related developments have improved gate parallelizability, reduced the required qudit dimensions, and accelerated decoding~\cite{guemard2025good,cervia2025magic,gasnier2026quantum}. The physical non-Clifford operations underlying these results, however, act on multiple qubits or on higher-dimensional qudits. These constructions therefore do not directly provide asymptotically good binary codes in which a tensor product of single-qubit physical gates implements logical $T$ on every encoded qubit.

Achieving this single-qubit implementation with full addressability
requires favorable code parameters and control over the logical action
on individual encoded qubits.
Asymptotically good binary CSS-$T$ families satisfy the parameter
requirement, but physical transversal $T$ induces the logical
identity~\cite{berardini2025asymptotically} or the logical Clifford gate
$S^\dagger$~\cite{reddy2026asymptotically} in those constructions.
Thus, preservation of the code space alone does not ensure the desired
logical action. For constructions realizing logical $T$, Hastings and
Haah~\cite{hastings2018distillation} and Haah~\cite{haah2018towers}
obtained families with growing dimension and distance, but vanishing
rate. More recently, Wills~\cite{wills2026improved} constructed explicit
codes with transversal logical $T$, constant rate, and a distance lower
bound of $2^{\Omega(\sqrt{\log N})}$. This establishes growing distance,
but does not establish linear distance. Beyond implementing $T$
simultaneously on all encoded qubits, full addressability requires the
ability to apply $T$ to any chosen logical qubit while acting as the
identity on the others, with the code and encoding fixed. Combining
this addressability with constant rate and linear distance therefore
motivates the following question:
\begin{quote}
    \emph{Can an explicit family of asymptotically good binary quantum
    codes admit fully addressable transversal $T$ gates?}
\end{quote}

In this paper, we answer this question affirmatively.  We construct an
explicit family of binary Calderbank--Shor--Steane (CSS) codes with
constant rate and linear distance that admit fully addressable transversal
logical $T$ gates.  On each code, every prescribed tensor product of
logical powers of $T$ is implemented exactly by a tensor product of
physical powers of $T$, with the code and encoding fixed independently
of the prescribed action and without any subsequent Clifford correction.
The construction combines the
multiplication properties of algebraic-geometric (AG) codes with generalized divisibility of
binary codes, using independently prescribed divisibility
coefficients to control individual logical phases.

\subsection{Main results}

We first specify the notion of transversality used in this paper.  
Write $\ZZ_8\coloneqq\ZZ/8\ZZ$ and
$T^c\coloneqq\bigotimes_{\ell=1}^K T^{c_\ell}$ for
$c\in\ZZ_8^K$.
\begin{definition}[Addressable transversal $T$ gates]
\label{def:transversal-T}
Let
$\calQ\subseteq(\CC^2)^{\otimes N}$ be an $[[N,K,D]]$ quantum
code, and let
$E:(\CC^2)^{\otimes K}\to(\CC^2)^{\otimes N}$ be an encoding
isometry with image $\calQ$.  For $c\in\ZZ_8^K$, we say that
$\calQ$ admits a transversal implementation of logical $T^c$ if
there exist one-qubit physical unitaries $U_1(c),\ldots,U_N(c)$ such that
\begin{equation}
    \left(\bigotimes_{j=1}^N U_j(c)\right)E=ET^c.
\end{equation}
We say that $\calQ$ admits \emph{fully addressable transversal
$T$ gates} if this holds for every $c\in\ZZ_8^K$.
\end{definition}

The choice $c=e_\ell$ implements $T$ on the $\ell$-th logical qubit
and the identity on all other logical qubits.  Conversely, composing
and taking powers of these individual implementations gives every
$T^c$.  Our main result establishes asymptotic goodness under this
definition, with each physical tensor factor a power of $T$.

\begin{theorem}[Informal version of \cref{thm:direct-good-transversal-T}]
\label{thm:intro-main}
There is an explicit family of binary CSS codes with parameters
$[[N_i,K_i=\Theta(N_i),D_i=\Theta(N_i)]]$ and admitting fully addressable transversal logical $T$ gates.
More specifically, for every $c\in\ZZ_8^{K_i}$, the logical $T^c$ gate can be implemented by powers of physical $T$ gates.
\end{theorem}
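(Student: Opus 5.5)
We reduce the gate statement to a divisibility condition on a binary CSS code, and then build a code satisfying it from an AG code over $\FF_{2^m}$ with a suitable binary embedding.

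\textbf{Step 1: reduction to divisibility.} Take a CSS code $\CSS(C_1,C_2)$ with $C_2^\perp\subseteq C_1\subseteq\FF_2^N$, and fix logical representatives $u_1,\dots,u_K\in C_1$ spanning $C_1$ modulo $C_2^\perp$. Encode as
\begin{equation}
E\ket{x}=|C_2^\perp|^{-1/2}\sum_{z\in C_2^\perp}\ket{\textstyle\sum_\ell x_\ell u_\ell+z}.
\end{equation}
For $a\in\ZZ_8^N$, the physical operator $\bigotimes_j T^{a_j}$ multiplies $\ket{y}$ by $\omega^{\sum_j a_j y_j}$, with $\omega=e^{\pi i/4}$ and $y_j\in\{0,1\}$ lifted to integers. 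So $T^c$ is implemented exactly when, for every $x$ and $z$,
\begin{equation}
\sum_j a_j y_j\equiv\sum_\ell c_\ell x_\ell \pmod 8,\qquad y=\textstyle\sum_\ell x_\ell u_\ell+z .
\end{equation}
The lifted weights of $\FF_2$-sums satisfy inclusion–exclusion, for example $|y\oplus y'|=|y|+|y'|-2|y\ast y'|$. Expanding this way, the condition holds if the following two requirements are met.
\begin{itemize}
\item The weighted weight $\sum_j a_j y_j$ of a single generator is prescribed mod $8$.
\item The weighted weight of every pairwise product (mod $4$) and every triple product (mod $2$) of generators vanishes, except for the diagonal terms.
\end{itemize}
This is a "generalized divisibility'' (weighted triorthogonality) condition with coefficient vector $a$. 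It must hold for every $c$ with the same $C_1,C_2,u_\ell$; only $a=a(c)$ may vary. We first show the cross terms vanish for all $a$ of a fixed structured shape. We then solve a linear system mod $8$ for the single-generator values. Because that system is lower triangular in the mod-$2$ and mod-$4$ pieces, it can be solved for arbitrary $c$ provided each $u_\ell$ has an "addressing'' coordinate set disjoint from the others.

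\textbf{Step 2: construction from AG codes.} Take a tower of function fields over $\FF_q$, $q=2^m$, with many rational points, such as Garcia–Stichtenoth. Let $\calC$ be an AG code $C_\calL(D,G)$ whose Schur cube $\calC^{\ast3}\subseteq C_\calL(D,3G)$ still has linear dual distance. Concatenate with a binary embedding $\phi:\FF_q\to\FF_2^n$ chosen so that binary weights of products of three embedded symbols are governed $\FF_2$-linearly (indeed $\ZZ_8$-affinely) by the trace of the product of the field elements. This is the role of the "suitable binary embedding'': a map with $\wt_{\ZZ_8}$-multiplicativity, such as a trace-based embedding or a Reed–Muller-type multiplication-friendly map.

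Then the cross terms become trace evaluations of elements of $\calC^{\ast3}$ (and $\calC^{\ast2}$) paired against a dual codeword. The dual codeword is supplied by the residue theorem / differential: it provides weights $a_j$ with $\sum_j a_j f(P_j)=0$ for all $f\in L(3G)$. The logical qubits sit on puncture points $Q_\ell$ with $\operatorname{Res}$ nonzero only there. By Riemann–Roch interpolation, the phase on each $Q_\ell$ can be chosen independently, giving addressability.

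\textbf{Step 3: parameters.} Rate and distance follow from the standard AG bounds. With $\deg G\approx N/6$ on the tower, which has $g/N\to1/(\sqrt q-1)$, we get $K=\Theta(N)$, and $D\ge$ (designed distance of $C_\calL(D,G)$) $\times$ (binary distance of $\phi$) $=\Theta(N)$. Here $q$ is chosen large enough that $3\deg G+2g<N$.

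\textbf{Main obstacle.} The hard step is constructing the binary embedding $\phi$ so that the mod-$8$ weight of $\phi$-images becomes an affine function of field multiplication up to degree three, together with lifting the $\FF_2$-valued dual relation to a $\ZZ_8$-valued coefficient vector $a$. I would first try an embedding via the $\FF_2$-basis trace map composed with a binary Reed–Muller code $RM(3,m')$, whose weights are $8$-divisible in the appropriate sense (Ax's theorem). If that fails, I would search, as the abstract suggests, for a minimum-weight element of an affine space of embeddings satisfying the divisibility constraints.
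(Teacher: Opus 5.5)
Your reduction in Step 1 is a valid sufficient condition. The construction in Steps 2--3, however, has a genuine gap exactly where you flag the main obstacle, and the embedding you ask for cannot exist.

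\textbf{The embedding target is not realizable.} For any $\FF_2$-linear $\phi$, the form $T(x,y,z)=\sum_s\phi_s(x)\phi_s(y)\phi_s(z)$ satisfies $T(x,x,y)=T(x,y,y)$, because $\phi_s(x)^2=\phi_s(x)$. But $\Tr(x^2y)+\Tr(xy^2)=\Tr\!\bigl((x^2+x^{2^{m-1}})y\bigr)$, which is nonzero for a suitable $y$ whenever $x^4\neq x$. So no binary embedding turns overlaps into $\Tr(xyz)$, let alone controls weights mod $8$ through the trace. The paper instead uses a form that is $\FF_2$-multilinear \emph{and} depends only on the set of distinct inputs:
$\Phi(x_1,\ldots,x_5)=\Tr\!\bigl(\prod_{S\subseteq[5],\,|S|\text{ odd}}\sum_{j\in S}x_j\bigr)$.
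Such forms are the ones that can be written as $\sum_s\prod_{a=1}^5\sigma_s(x_a)$, with the $\zeta_S$ obtained from a triangular system. The affine-space search you mention only shortens an embedding for this $\Phi$; it cannot rescue a target that is not realizable. Since $\Phi$ is a product of sixteen linear forms, its vanishing on codewords requires $\calA^{\star15}\subseteq\calA^\perp$, i.e.\ $16G\le R+D$. Together with positive dual distance, this forces $\sqrt q>31$ on the Garcia--Stichtenoth tower, so $q=1024$ is the smallest choice. Your budget $\deg G\approx N/6$, $3\deg G+2g<N$ is calibrated to a far weaker product condition.

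\textbf{Triple overlaps do not suffice for the $\ZZ_8$ lift.} A correct triple-overlap identity would only fix $a\bmod 2$, and you give no mechanism for lifting it. Write $a=p+2t+4z$.
\begin{itemize}
\item The mod-$4$ layer needs $t$ with $t\cdot u\equiv\wt_p(u)/2\pmod 2$ for all $u\in\calC^{\star2}$.
\item This is consistent only if $u\mapsto\wt_p(u)/2$ is linear on $\calC^{\star2}$, i.e.\ $\wt_p(u\star v)$ is even for $u,v\in\calC^{\star2}$, i.e.\ $p\perp\calC^{\star4}$.
\end{itemize}
This is why the paper needs the binary $4$-multiplication property (all fivefold overlaps even), and hence five inputs. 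Every codeword is then an admissible parity pattern. The top layer $z$ exists because pair overlaps then vanish mod $4$.

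\textbf{Addressability.} It does not come from $\FF_q$-valued residues at puncture points, which are not $\ZZ_8$ phases. The paper uses an information set $S$:
\begin{itemize}
\item the selected field coordinates are restricted to $x\FF_2$ with $\Tr(x)=1$ and compressed to one bit each; parity is preserved because $\wt(\sigma(x))\equiv\Tr(x)$;
\item codewords $p^{(\ell)}$ with $p^{(\ell)}_S=e_\ell$ are lifted to $a^{(\ell)}\in\calN_8(\calC)$;
\item the matrix $A_S\equiv I_K\pmod 2$ is inverted over $\ZZ_8$.
\end{itemize}

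\textbf{Distance.} Your bound $D\ge\dist(\calA)\cdot d_{\mathrm{in}}$ speaks only to $D_X$. The concatenated code has low-weight in-block dual vectors, so $D_Z$ needs a separate argument. The paper proves $D_Z\ge\dist(\calA^\perp)-K$ using that $\calA$ surjects onto any set of fewer than $\dist(\calA^\perp)$ coordinates.
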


The physical exponents depend linearly on the prescribed logical
coefficients over $\ZZ_8$.  They may be zero or even for a general
$c$; for $c=\vb{1}_{K_i}$, they are all odd.  The logical action in
\cref{thm:intro-main} is exact and requires no subsequent
Clifford correction.
The numerical specialization of this explicit family is
recorded in \cref{cor:multiplication-family-parameters} of
Appendix~\ref{app:explicit-parameters}.

The classical ingredient is a family of asymptotically good binary
codes with a generalized divisibility property.  Haah's formulation
uses odd coefficient vectors~\cite{haah2018towers}.  Here, we also allow
even coefficients and consider the full $\ZZ_8$-module
\begin{equation}
    \mathcal N_8(\calC)
    \coloneqq\left\{a\in\ZZ_8^n:
       \sum_{j=1}^n a_jx_j\equiv0\pmod8
       \text{ for every }x\in\calC\right\}.
\end{equation}
If coefficients in this module can be prescribed arbitrarily on a set
of information coordinates, puncturing and shortening on those
coordinates produce fully addressable transversal $T$ gates
(\cref{lem:weighted-divisibility-transversal-T}).

{By \cref{lem:parity-coefficient-lift}, every vector in
$(\calC^{\star4})^\perp$ admits a weighted $8$-divisibility lift
with the same parity.  For a code with the $4$-multiplication property,
this applies to every codeword.  Lifting suitable codewords gives a
coefficient matrix whose restriction to the selected information
coordinates is invertible over $\ZZ_8$.  Normalizing this restriction
gives the prescribed coefficients required by the addressability
criterion (\cref{lem:prescribed-coefficients}).}

{We obtain the required binary codes from AG codes over
$\FF_{1024}$ with the $15$-multiplication property and positive rate
and primal and dual relative distances
(\cref{lem:good-ag-fifteen-multiplication-code}).  The fixed binary
embedding in \cref{lem:good-weighted-divisible-code} converts their
sixteenfold product identity into vanishing fivefold binary overlaps.
Restricting selected field coordinates to a one-dimensional binary
subspace generated by an element of trace one and compressing their
images preserves the $4$-multiplication property.  Applying the
divisibility criterion to the compressed code gives fully addressable
transversal $T$ gates.  The outer primal and dual distances yield
linear lower bounds on the quantum $X$- and $Z$-distances, respectively
(\cref{thm:direct-good-transversal-T}).}

{\Cref{sec:minimum-length-embeddings} expresses the minimum
length of an embedding satisfying the fivefold identity as the minimum
Hamming weight in a binary affine space.  We give methods to find and
verify shorter embeddings.  For the $\FF_{1024}$ construction, these
improve the asymptotic rate and the relative minimum-distance lower
bound.  Explicit supports, certified inner parameters, and the resulting
quantum-code bounds are recorded in \cref{app:explicit-parameters}.}

\subsection{Related work}

{Bravyi and Haah~\cite{bravyi2012magic} introduced triorthogonal
matrices to construct CSS codes for magic-state distillation. Their
conditions on the overlaps of distinct pairs and triples of rows ensure
that physical transversal $T$, possibly followed by a Clifford
correction, implements $T$ on every logical qubit.
Hastings and Haah~\cite{hastings2018distillation} obtained sublogarithmic
distillation overhead using punctured Reed--Muller codes.
Haah and Hastings~\cite{haah2018codes} generalized triorthogonality to
construct distillation protocols for $T$, controlled-$S$, and Toffoli
gates. Rengaswamy et al.~\cite{rengaswamy2020optimality} characterized
the conditions under which physical transversal $T$ on a CSS code
implements $T$ on every logical qubit.}

{The connection between codeword weights and overlap conditions
is formalized by Ward's divisibility criterion, obtained by polarizing
the weight function~\cite{ward1990weight}.
Haah~\cite{haah2018towers} introduced generalized divisibility with odd
coefficient vectors and constructed towers of codes supporting diagonal
gates at successive levels of the Clifford hierarchy.
For odd coefficient vectors, our weighted $8$-divisibility agrees with
his level-three divisibility condition. We allow coefficients throughout
$\ZZ_8$, including even and zero entries, and use the binary
$4$-multiplication property to prescribe their values on selected
information coordinates (\cref{lem:prescribed-coefficients}).
Together with \cref{lem:weighted-divisibility-transversal-T}, this yields
independently prescribed logical powers of $T$ with a fixed encoding
and no subsequent Clifford correction.}

{AG codes provide a route to combining transversal gates with
good asymptotic parameters. Building on the Reed-Solomon approach of
Krishna and Tillich~\cite{krishna2019towards}, Golowich and
Guruswami~\cite{golowich2025asymptotically} and
Nguyen~\cite{nguyen2025good} combined AG codes with alphabet reduction
to obtain asymptotically good binary codes with transversal $CCZ$.
He et al.~\cite{he2025asymptotically} obtained fully addressable $CCZ$
on asymptotically good binary codes; their earlier addressable-orthogonality
framework also treats logical $T$ up to Clifford
corrections~\cite{he2025quantum}.
Extensions to parallelizable multi-controlled gates appear
in~\cite{guemard2025good,gasnier2026quantum}.
AG codes also underlie the constant-overhead magic-state distillation
protocol of Wills, Hsieh, and Yamasaki~\cite{wills2025constant}.}

{For physical transversal $T$, the induced logical action is
a separate requirement from preservation of the code space.
Berardini et al.~\cite{berardini2025asymptotically} and Reddy and
Kashyap~\cite{reddy2026asymptotically} constructed asymptotically good
CSS-T codes on which this operation induces the logical identity and
logical $S^\dagger$ on every encoded qubit, respectively.
General descriptions of diagonal logical actions are given
in~\cite{hu2022designing,camps2026transversal}, and Reddy and
Kashyap~\cite{reddy2026realizing} developed an appending construction
for prescribed logical rotations, including addressable ones.
Further CSS-T constructions are studied in~\cite{jaramillo2026matrix}.}

{Other constructions target logical $T$ directly.
Wills~\cite{wills2026improved} used divisible monomial codes to obtain
constant-rate families with growing distance and simultaneous logical
$T$ without a Clifford correction. Cao and
Lackey~\cite{cao2026quantum} developed tensor-network constructions
with transversal and addressable gates.
Short-code constructions and decoding methods are studied
in~\cite{jain2025transversal,baldelli2026constructing}; related
code-switching approaches appear
in~\cite{anderson2014fault,dastbasteh2026quantum}.}

{In independent and concurrent work, San-Jos\'e constructed
asymptotically good binary CSS codes with transversal diagonal rotations
at every fixed level of the Clifford hierarchy from the third onward,
including constructions requiring no subsequent correction.
He also obtained addressable rotations by applying a hierarchy-lowering
argument to the construction at the next
level~\cite[Remark~6.8]{sanjose2026asymptotically}.
We instead obtain fully addressable transversal $T$ directly by prescribing
weighted $8$-divisibility coefficients under a fivefold binary overlap
condition, without invoking a construction at a higher level of the
Clifford hierarchy
(\cref{lem:prescribed-coefficients,lem:weighted-divisibility-transversal-T}).}

\section{Preliminaries}

\paragraph{Notations.}
Let $[n]\coloneqq\{1,\ldots,n\}$.
We write $\overline{S}$ for the complement of $S$ and $e_\ell$ for the standard unit vector at coordinate $\ell$, with the index sets understood from context.
For $q=2^m$, let $\Tr:\FF_q\to\FF_2$ denote the absolute trace, given by $\Tr(x)\coloneqq\sum_{j=0}^{m-1}x^{2^j}$~\cite{lidl1997finite}.
We also write $\Tr_m$ to specify the extension degree.

\paragraph{Classical codes.}

A vector is a row vector unless otherwise specified.
A classical code $\calC \subseteq \FF_q^n$ is a linear subspace of $\FF_q^n$.
The dual code $\calC^\perp$ is the orthogonal complement of $\calC$ with respect to the standard inner product.
For $v\in\FF_q^n$, its Hamming weight $\wt(v)$ is the number of its nonzero entries; we may also use $|v|$ to denote its Hamming weight.
The distance of a nonzero linear code $\calC$ is $\dist(\calC)\coloneqq\min\{\wt(c):c\in\calC\setminus\{0\}\}$.
For a matrix $M$, let $\rowsp(M)$ denote the linear span of its rows.
For $I\subseteq[n]$, let
$c_I\coloneqq(c_j)_{j\in I}$.  The puncturing of $\calC$ on $I$ is
\begin{equation}
    \punct_I(\calC)
    \coloneqq\{c_{[n]\setminus I}:c\in\calC\}.
\end{equation}
The shortening of $\calC$ on $I$ is
\begin{equation}
    \short_I(\calC)
    \coloneqq
    \left\{c_{[n]\setminus I}:
    c\in\calC,\ c_j=0\ \text{for every }j\in I\right\}.
\end{equation}

We will also use the multiplication property of classical codes, defined as follows.
\begin{definition}[$t$-multiplication property]
\label{def:t-multiplication-property}
\label{def:star-product}
Let $t\geq1$ be an integer.
A linear code $\calC\subseteq\FF_q^n$ has the \emph{$t$-multiplication property}
if for any codewords $c^{(1)}, \ldots, c^{(t+1)} \in \calC$, their component-wise product satisfies
\begin{equation}
    \left| c^{(1)} \star \cdots \star c^{(t+1)} \right| := \sum_{j=1}^n c^{(1)}_j \cdots c_{j}^{(t+1)} \equiv 0 \pmod q.
\end{equation}
Equivalently, the code $\calC$ satisfies the $t$-multiplication property if and only if $\calC^{\star t}\subseteq\calC^\perp$, where $\calC^{\star t}
\coloneqq
\Span_{\FF_q}
\left\{
c^{(1)}\star\cdots\star c^{(t)}:
c^{(1)},\ldots,c^{(t)}\in\calC
\right\}.$
\end{definition}
More generally, a multiplication property requires
$\calC^{\star t}\subseteq\calD$ for a prescribed target
code $\calD$~\cite{couvreur2021algebraic}.  Our choice of
$\calD=\calC^\perp$ follows Nguyen~\cite{nguyen2025good}.

\paragraph{CSS codes.}
Let $\calC_0\subseteq\calC_1\subseteq\FF_2^N$ be binary linear codes.  For
$a,b\in\FF_2^N$, define
$X(a)\coloneqq\bigotimes_{i=1}^N X^{a_i}$ and
$Z(b)\coloneqq\bigotimes_{i=1}^N Z^{b_i}$.  The CSS code
$\CSS(\calC_0,\calC_1)$ is the joint $+1$ eigenspace of the stabilizer group $\left\langle X(a) : a\in \calC_0,\; Z(b) : b\in \calC_1^\perp \right\rangle$
or, equivalently, the span of the coset states
\begin{equation}
    \ket{c+\calC_0}
    \coloneqq \frac{1}{\sqrt{|\calC_0|}}
    \sum_{a\in\calC_0}\ket{c+a},
    \qquad c\in\calC_1.
\end{equation}
The inclusion $\calC_0\subseteq\calC_1$ ensures that the stabilizer
generators commute.  Its rank is
$\dim\calC_0+\dim\calC_1^\perp
=N-\dim\calC_1+\dim\calC_0$, so the code encodes
$K=\dim\calC_1-\dim\calC_0$ logical qubits.
The $X$- and $Z$-distances of $\CSS(\calC_0,\calC_1)$ are,
respectively,
\begin{align}
    D_X\coloneqq \min\left\{\wt(x) :
            x\in \calC_1\setminus\calC_0\right\}, \qquad
    D_Z\coloneqq \min\left\{\wt(z) :
            z\in \calC_0^\perp\setminus\calC_1^\perp\right\}.
\end{align}
Equivalently, $D_X$ and $D_Z$ are the minimum weights of nontrivial logical
$X$- and $Z$-type Pauli operators.  The distance of the quantum code is
$D=\min\{D_X,D_Z\}$.

A CSS code is \emph{asymptotically good} if, for sufficiently large block length $N$, its dimension and
both directional distances grow linearly with the $N$; namely, $K=\Omega(N)$, $D_X=\Omega(N)$, and $D_Z=\Omega(N)$.

\paragraph{Divisibility.}
A code $\calC$ is divisible by an integer $\eta$ if every codeword has Hamming weight
divisible by $\eta$~\cite{ward1990weight}.
Divisibility turns out to be closely related to transversality.
We use a weighted version of this property with arbitrary
coefficients modulo $8$.
\begin{definition}[Weighted $8$-divisibility]
\label{def:weighted-8-divisibility}
Let $a=(a_1,\ldots,a_n)\in\ZZ_8^n$ be a coefficient vector.
Define the $a$-weighted Hamming weight of $c\in\FF_2^n$ by
$\wt_a(c)\coloneqq\sum_{j=1}^n a_jc_j$.
A binary linear code $\calC\subseteq\FF_2^n$ is
\emph{$a$-weighted $8$-divisible} if $\wt_a(c)\equiv0\pmod8$ for
every $c\in\calC$.
For $a=\vb{1}_n$ (all entries of $a$ are 1), this reduces to ordinary $8$-divisibility.
\end{definition}
Haah~\cite{haah2018towers} studied weighted divisibility with odd
coefficient vectors in the construction of CSS codes with transversal
gates at successive levels of the Clifford hierarchy.
For a fixed code $\calC$, the coefficient vectors giving weighted
$8$-divisibility form the $\ZZ_8$-module,
\begin{equation}
    \mathcal N_8(\calC)
    \coloneqq\{a\in\ZZ_8^n:
       \wt_a(x)\equiv0\pmod8\text{ for every }x\in\calC\}.
    \label{eq:divisibility-coefficient-module}
\end{equation}

\section{Addressable transversal \texorpdfstring{$T$}{T} gates from generalized divisibility}

In this section, we show that weighted divisibility implies fully addressable transversality of logical $T$.
The construction of CSS codes uses the puncturing-and-shortening construction of Krishna and
Tillich~\cite{krishna2019towards}, which is also used in
\cite{nguyen2025good}.

\begin{lemma}
\label{lem:weighted-divisibility-transversal-T}
Let $\calC\subseteq\FF_2^n$ be a binary $[n,k,d]$ code, and let
$S\subseteq[n]$ be a set of $K\geq1$ coordinates such that
$\{c_S: c \in \calC\} =\FF_2^S$.
Suppose that, for every $\ell\in S$, there is a vector
$b^{(\ell)}\in\ZZ_8^{\overline S}$ such that $\calC$ is
$(-e_\ell,b^{(\ell)})$-weighted $8$-divisible.
Then, the codes $\calC_0\coloneqq\short_S(\calC)$ and $\calC_1\coloneqq\punct_S(\calC)$
define a CSS code $\CSS(\calC_0, \calC_1)$ with parameters $[[N=n-K,K,D]]$ and a fixed
encoding admitting fully addressable transversal $T$ gates implemented
by physical powers of $T$.
Moreover, $D_X\geq d-K$, $D_Z=\min\{\wt(z):z\in\calC_0^\perp\setminus\calC_1^\perp\}$ and $D = \min\{D_X,D_Z\}$.
\end{lemma}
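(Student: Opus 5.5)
We first fix the encoding.
Because the projection $c\mapsto c_S$ from $\calC$ onto $\FF_2^S$ is surjective, for each $u\in\FF_2^S$ we choose a codeword $\hat c(u)\in\calC$ with $\hat c(u)_S=u$. We make these choices linearly, i.e.\ via a linear section $\sigma:\FF_2^S\to\calC$.

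Its kernel part gives the stabilizer code. The codewords with $c_S=0$ restrict to $\calC_0=\short_S(\calC)$, and clearly $\calC_0\subseteq\calC_1=\punct_S(\calC)$.

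We then check that $\dim\calC_1-\dim\calC_0=K$. The kernel of $\calC\to\calC_1$ consists of codewords supported on $S$. Such a codeword lies in $\short_S$-terms only trivially, so $\dim\calC_1=k-\dim(\calC\cap \FF_2^S\times 0)$ and $\dim\calC_0=k-K$. This would give a deficit whenever $\calC$ contains a nonzero word supported inside $S$.

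To rule that out, apply the divisibility hypothesis to such a word $x$. For $\ell\in\supp(x)$, the condition $(-e_\ell,b^{(\ell)})$-divisibility gives $-x_\ell\equiv0\pmod 8$, because $x_{\overline S}=0$. This is a contradiction, so no such word exists. Hence $\calC\to\calC_1$ is injective and $K=\dim\calC_1-\dim\calC_0$.

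Define the encoding by
\begin{equation}
E\ket{u}=\ket{\sigma(u)_{\overline S}+\calC_0},\qquad u\in\FF_2^S.
\end{equation}
Since $\calC_1/\calC_0\cong\FF_2^S$ via $c\mapsto c_S$ (well defined by the injectivity above), $E$ is an isometry onto $\CSS(\calC_0,\calC_1)$.

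For the gate, given $c\in\ZZ_8^K$ we set the physical exponent vector
\begin{equation}
\beta\coloneqq\sum_{\ell\in S}c_\ell b^{(\ell)}\in\ZZ_8^{\overline S},
\end{equation}
and apply $\bigotimes_{j\in\overline S}T^{\beta_j}$. On a basis state $\ket{y}$ with $y=x_{\overline S}$ for $x\in\calC$, this operator multiplies by $\omega^{\wt_\beta(y)}$, where $\omega=e^{\pi i/4}$.

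Summing the divisibility relations with coefficients $c_\ell$ gives
\begin{equation}
\wt_\beta(x_{\overline S})\equiv\sum_\ell c_\ell x_\ell\pmod 8
\end{equation}
for every $x\in\calC$. For every term $y=\sigma(u)_{\overline S}+a$ with $a\in\calC_0$, the preimage $x$ has $x_S=u$, so the phase equals $\omega^{\sum_\ell c_\ell u_\ell}$. This phase is constant on the coset, so $E$ intertwines the physical operator with $T^c$ exactly. This step relies on using the integer lift of the binary word $x$, with $x_\ell\in\{0,1\}$, consistently in $\wt_a$; we will check that no mod-2 reduction issue arises here.

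For the distances, take $y\in\calC_1\setminus\calC_0$. It lifts to $x\in\calC$ with $x_S\neq0$, so $x\neq0$ and $\wt(y)\geq\wt(x)-K\geq d-K$. The formula for $D_Z$ and $D=\min\{D_X,D_Z\}$ are just the definitions recalled in the preliminaries.

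We expect the main obstacle to be the dimension count. It hinges on the no-words-supported-in-$S$ argument derived above from the $-e_\ell$ coefficient, and everything else is routine.
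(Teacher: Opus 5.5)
Your proposal is correct and follows essentially the same route as the paper. Your linear section $\FF_2^S\to\calC$ is the paper's systematic generator matrix $\begin{pmatrix} I_K & H_1\\ 0 & H_0\end{pmatrix}$, and your use of the $-e_\ell$ coefficient to exclude nonzero codewords supported in $S$ is exactly the paper's proof that the rows of $\binom{H_1}{H_0}$ are linearly independent. The encoding, the phase computation via $\sum_\ell c_\ell b^{(\ell)}$, and the bound $D_X\geq d-K$ then match the paper step for step; the integer-lift concern you flag is harmless, since $\wt_a(x)$ is $\ZZ_8$-linear in $a$ for fixed binary $x$.
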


\begin{proof}
Since
$\{c_S: c \in \calC\}=\FF_2^S$, we may choose a full-row-rank
generator matrix of $\calC$ of the form
\begin{equation}
    G=\begin{pmatrix}I_K&H_1\\0&H_0\end{pmatrix}.
    \label{eq:addressable-generator}
\end{equation}
We next show that the rows of $\binom{H_1}{H_0}$ are linearly
independent.  Suppose that $xH_1+yH_0=0$ for
$x\in\FF_2^K$ and $y\in\FF_2^{k-K}$.  Then
\begin{equation}
    (x,0)=x(I_K,H_1)+y(0,H_0)\in\calC.
\end{equation}
For every $\ell\in S$, weighted divisibility gives
\begin{equation}
    0\equiv\wt_{(-e_\ell,b^{(\ell)})}(x,0)=-x_\ell\pmod8.
\end{equation}
Since $x_\ell\in\{0,1\}$, we obtain $x=0$ and hence $yH_0=0$.
The rows of $H_0$ are linearly independent because $G$ has full row
rank, so $y=0$.  This proves the claimed linear independence.
Consequently,
\begin{equation}
    \calC_0=\rowsp(H_0),\qquad
    \calC_1=\rowsp\begin{pmatrix}H_1\\H_0\end{pmatrix},
    \qquad
    \dim\calC_1-\dim\calC_0=K.
\end{equation}
Fix the encoding
\begin{equation}
    E\ket{x}=\frac1{\sqrt{|\calC_0|}}
       \sum_{v\in\calC_0}\ket{xH_1+v},
    \qquad x\in\FF_2^K.
    \label{eq:addressable-encoding}
\end{equation}
The cosets in this expression are distinct, so these states are
orthonormal and span $\CSS(\calC_0,\calC_1)$.

We now prove that this fixed encoding admits fully addressable
transversal $T$ gates.
For any $c\in\ZZ_8^S$, we have $E T^c \ket{x}=\omega_8^{\sum_\ell c_\ell x_\ell}E\ket{x}$, where $\omega_8=e^{\pi i/4}$.
Set
\begin{equation}
    b_c\coloneqq\sum_{\ell\in S}c_\ell b^{(\ell)}
       \in\ZZ_8^{\overline S}.
\end{equation}
Taking the same linear combination of the assumed divisibility
relations shows that $\calC$ is $(-c,b_c)$-weighted $8$-divisible.
For every $x\in\FF_2^S$ and $v\in\calC_0$, the vector
$(x,xH_1+v)$ belongs to $\calC$, so
\begin{equation}
    \wt_{b_c}(xH_1+v)
       \equiv\sum_{\ell\in S}c_\ell x_\ell\pmod8.
    \label{eq:exact-addressable-phase}
\end{equation}
Note that the action of $\left(\bigotimes_{j\in\overline S}T^{(b_c)_j}\right)$ on every summand in
\cref{eq:addressable-encoding} gives a phase $\omega_8^{\wt_{b_c}(xH_1+v)}$.  
Hence,
\begin{equation}
    \left(\bigotimes_{j\in\overline S}T^{(b_c)_j}\right)E=ET^c
    \qquad(c\in\ZZ_8^S).
    \label{eq:addressable-puncturing-action}
\end{equation}
This proves fully addressable transversality with the encoding $E$
fixed independently of $c$.

Finally, if $w=xH_1+v\in\calC_1\setminus\calC_0$, then $x\neq0$
and $(x,w)\in\calC\setminus\{0\}$.  So, $\wt(w)\geq d-\wt(x)\geq d-K$.
This proves the $X$-distance bound.
The $Z$-distance bound is by definition.
\end{proof}

\section{Good classical codes with generalized divisibility}
\label{sec:gen_divisibility}

We construct asymptotically good binary codes whose divisibility
coefficients can be prescribed on any set of information coordinates.
First, \cref{lem:prescribed-coefficients} derives the weighted
$8$-divisibility conditions required in
\cref{lem:weighted-divisibility-transversal-T} from the binary
$4$-multiplication property, using a parity-lifting argument in its
proof.  We then construct AG codes over $\FF_{1024}$ with the
$15$-multiplication property and positive rate and positive primal and
dual relative distances in
\cref{lem:good-ag-fifteen-multiplication-code}.
\Cref{lem:good-weighted-divisible-code} maps these codes to binary
codes with the $4$-multiplication property.  Finally,
\cref{thm:direct-good-transversal-T} combines this construction with
\cref{lem:weighted-divisibility-transversal-T} to obtain asymptotically
good CSS codes with fully addressable transversal $T$ gates.
For the AG-code properties used in this section, we refer to the standard textbook by
Stichtenoth~\cite{stichtenoth2009algebraic}; for completeness, we also provide a
self-contained overview in
Appendix~\ref{app:ag-codes}.

\subsection{Generalized divisibility from the multiplication property}

We now show that the binary $4$-multiplication property implies the
weighted-divisibility conditions required in
\cref{lem:weighted-divisibility-transversal-T}.  The main ingredient
is a parity-lifting argument that converts a vector in
$(\calC^{\star4})^\perp$ into a weighted $8$-divisibility coefficient
vector with the prescribed parity.

\begin{lemma}
\label{lem:prescribed-coefficients}
Let $\calC\subseteq\FF_2^n$ be a code satisfying the $4$-multiplication property,
and let $S\subseteq[n]$ be a set of $K\geq1$ coordinates such that
$\{c_S:c\in\calC\}=\FF_2^S$.
Then, for every $\ell\in S$, there exists an efficiently computable vector
$b^{(\ell)}\in\ZZ_8^{\overline S}$ such that $\calC$ is
$(-e_\ell,b^{(\ell)})$-weighted $8$-divisible.
\end{lemma}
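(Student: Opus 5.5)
The plan is to prove a parity-lifting statement first: every binary $u\in(\calC^{\star4})^\perp$ has a lift $a\in\mathcal N_8(\calC)$ with $a\equiv u\pmod 2$. Then I apply it to codewords that are unit vectors on $S$ and invert the resulting coefficient block over $\ZZ_8$.

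\emph{Step 1 (lifting).} Take $u\in(\calC^{\star4})^\perp$ and regard its entries as $0/1$ integers. Since $c\star c=c$ for binary vectors, $\calC^{\star r}\subseteq\calC^{\star4}$ for every $1\le r\le4$, so $u$ is orthogonal mod $2$ to all products of at most four codewords. The basic identity I will use is that, for binary $z_1,z_2$ and any integer weight vector $a$,
\begin{equation}
\wt_a(z_1+z_2)=\wt_a(z_1)+\wt_a(z_2)-2\wt_a(z_1\star z_2).
\end{equation}
I look for $a=u+2v+4w$ in three stages.

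First, for $z\in\calC^{\star2}$, the number $\wt_u(z)$ is even. Define $h(z)\coloneqq\wt_u(z)/2\bmod 2$. By the identity, $h(z_1+z_2)=h(z_1)+h(z_2)-\wt_u(z_1\star z_2)$. The last term is even because $z_1\star z_2\in\calC^{\star4}$. Hence $h$ is $\FF_2$-linear on $\calC^{\star2}$, and extending it to $\FF_2^n$ gives $v$ with $\langle v,z\rangle\equiv h(z)$ for all $z\in\calC^{\star2}$. With $a'\coloneqq u+2v$, this yields $\wt_{a'}(z)\equiv0\pmod4$ for all $z\in\calC^{\star2}$.

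Second, every $c\in\calC$ lies in $\calC^{\star2}$, so $\wt_{a'}(c)\equiv0\pmod4$. Define $g(c)\coloneqq\wt_{a'}(c)/4\bmod2$. By the identity, $g(c_1+c_2)-g(c_1)-g(c_2)\equiv-\wt_{a'}(c_1\star c_2)/2\equiv0\pmod 2$, because $c_1\star c_2\in\calC^{\star2}$ has $a'$-weight divisible by $4$. So $g$ is linear on $\calC$. Representing $-g$ by some $w$ gives $\wt_{u+2v+4w}(c)\equiv0\pmod8$ for all $c\in\calC$.

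Each stage is only a linear system over $\FF_2$, so the lift is efficiently computable.

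\emph{Step 2 (prescribing coefficients).} The $4$-multiplication property gives $\calC^{\star4}\subseteq\calC^\perp$, hence $\calC\subseteq(\calC^{\star4})^\perp$. Because $\{c_S\}=\FF_2^S$, for each $m\in S$ I can pick $c^{(m)}\in\calC$ with $c^{(m)}_S=e_m$. Lifting each $c^{(m)}$ by Step 1 gives $a^{(m)}\in\mathcal N_8(\calC)$. Let $A$ be the $K\times K$ matrix whose rows are $a^{(m)}_S$. Then $A\equiv I_K\pmod2$, so $\det A$ is odd and $A$ is invertible over $\ZZ_8$. Since $\mathcal N_8(\calC)$ is a $\ZZ_8$-module, the rows of $-A^{-1}\bigl(a^{(m)}\bigr)_m$ still lie in it. Their $S$-parts are $-e_\ell$, and their $\overline S$-parts are the required vectors $b^{(\ell)}$, again obtained by efficient linear algebra.

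The main obstacle is Step 1, specifically checking that the functionals $h$ and $g$ are well defined and linear, so that the prescribed values extend consistently. This is exactly where the fourfold orthogonality $u\perp\calC^{\star4}$ is consumed: at the second-level consistency check via $z_1\star z_2\in\calC^{\star4}$, and it is also what makes the third-level check go through.
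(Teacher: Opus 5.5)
Your proposal is correct and follows the paper's proof essentially step for step. The paper likewise first proves a parity-lifting lemma: it writes the lift as $p+2t+4z$ and obtains $t$ and $z$ from $\FF_2$-linear functionals on $\calC^{\star2}$ and on $\calC$, whose linearity is exactly the consistency check you identify. It then lifts codewords with $c_S=e_\ell$, inverts the odd-determinant block $A_S$ over $\ZZ_8$, and negates; your remark that $\calC^{\star r}\subseteq\calC^{\star4}$ for $r\le4$ is just a more explicit form of the paper's ``take $v=u$'' step showing $\wt_p(u)$ is even.
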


Hence, the hypotheses of
\cref{lem:weighted-divisibility-transversal-T} are satisfied.
The proof of \cref{lem:prescribed-coefficients} relies on the following parity-lifting lemma.

\begin{lemma}
\label{lem:parity-coefficient-lift}
Let $\calC\subseteq\FF_2^n$ be a binary linear code.
For every
$p\in(\calC^{\star4})^\perp$, there exists an efficiently computable vector
$a\in\calN_8(\calC)$ such that $a\equiv p\pmod2$.
\end{lemma}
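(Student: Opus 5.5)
The plan is to build $a$ as a $2$-adic lift $a=p+2u+4v$ with $u,v\in\FF_2^n$, where $p$, $u$, $v$ are read as $\{0,1\}$-vectors in $\ZZ_8^n$. We fix $u$ and $v$ by two successive linear systems over $\FF_2$, one for each binary digit above the parity. Both systems come from polarizing the weighted weight modulo $8$, in the spirit of Ward's criterion.

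\textbf{Step 1 (polarization).} Fix a basis $g_1,\ldots,g_k$ of $\calC$. For $x=\sum_{i\in A}g_i$ over $\FF_2$, inclusion--exclusion gives an identity of integer vectors:
\begin{equation}
  x=\sum_{\emptyset\neq T\subseteq A}(-2)^{|T|-1}\,\textstyle\prod_{i\in T}g_i .
\end{equation}
Here products are componentwise and $x$ denotes its $\{0,1\}$ lift. Modulo $8$, all terms with $|T|\geq4$ vanish, so
\begin{equation}
  \wt_a(x)\equiv\sum_i\wt_a(g_i)-2\sum_{i<j}\wt_a(g_i\star g_j)+4\sum_{i<j<l}\wt_a(g_i\star g_j\star g_l)\pmod 8 .
\end{equation}
It therefore suffices to impose three conditions for all basis indices:
\begin{itemize}
  \item[(i)] $\wt_a(g_i\star g_j\star g_l)\equiv0\pmod2$;
  \item[(ii)] $\wt_a(g_i\star g_j)\equiv0\pmod4$;
  \item[(iii)] $\wt_a(g_i)\equiv0\pmod8$.
\end{itemize}

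\textbf{Step 2 (the parity level is free).} Since $c\star c=c$ for binary vectors, we have $\calC\subseteq\calC^{\star2}\subseteq\calC^{\star3}\subseteq\calC^{\star4}$. Hence $p\in(\calC^{\star t})^\perp$ for all $t\leq4$. Condition (i) depends only on $a\bmod2=p$, so it holds automatically.

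\textbf{Step 3 (choosing $u$).} This is the main obstacle. For binary $y$, the integer $\wt_p(y)=\sum_j p_jy_j$ is even whenever $y\in\calC^{\star2}$. Condition (ii) for $a=p+2u$ reads
\begin{equation}
  u\cdot y\equiv\phi(y)\pmod2,\qquad \phi(y)\coloneqq\tfrac12\wt_p(y)\bmod 2 ,
\end{equation}
for all $y=g_i\star g_j$, including $i=j$. This system is consistent provided $\phi$ is $\FF_2$-linear on $\calC^{\star2}$.

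To prove linearity, apply polarization one level down. For $y=y_1+y_2$ over $\FF_2$, the integer identity $y=y_1+y_2-2\,y_1\star y_2$ gives
\begin{equation}
  \wt_p(y)=\wt_p(y_1)+\wt_p(y_2)-2\wt_p(y_1\star y_2).
\end{equation}
The last weight is even because $y_1\star y_2\in\calC^{\star4}$. So $\wt_p(y)\equiv\wt_p(y_1)+\wt_p(y_2)\pmod4$, which shows $\phi$ is additive.

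Any linear functional on the subspace $\calC^{\star2}\subseteq\FF_2^n$ is of the form $y\mapsto u\cdot y$ for some $u\in\FF_2^n$. Such a $u$ is found by Gaussian elimination on the generators $g_i\star g_j$.

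\textbf{Step 4 (choosing $v$).} With $u$ fixed, take $i=j$ in (ii). This gives $\wt_{p+2u}(g_i)\equiv0\pmod4$. Condition (iii) then becomes
\begin{equation}
  v\cdot g_i\equiv\tfrac14\wt_{p+2u}(g_i)\pmod2,\qquad i=1,\ldots,k .
\end{equation}
This system is always solvable because the $g_i$ are linearly independent.

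Then $a=p+2u+4v$ satisfies (i)--(iii), so $a\in\calN_8(\calC)$ by Step 1, and $a\equiv p\pmod2$. Every step is polynomial-time linear algebra over $\FF_2$, which gives efficient computability.
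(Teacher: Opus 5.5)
Your proof is correct and follows essentially the paper's route. Both write $a=p+2t+4z$, take $t$ from the $\FF_2$-linear functional $\tfrac12\wt_p(\cdot)\bmod 2$ on $\calC^{\star2}$ (linear because pairwise products lie in $\calC^{\star4}$), and obtain $z$ from a linear system on a basis of $\calC$. The only difference is in the last step: the paper shows that $\psi(x)=\tfrac14\wt_{p+2t}(x)\bmod 2$ is linear on all of $\calC$, whereas you use the full inclusion--exclusion expansion modulo $8$, with the triple-overlap parity coming from $\calC^{\star3}\subseteq\calC^{\star4}$.
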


\begin{proof}
Every vector $a\in\ZZ_8^n$ with $a\equiv p\pmod2$ can be uniquely
written as $a=p+2t+4z$, where $t,z\in\FF_2^n$.
We construct $t$ and $z$ successively so that
$a\in\calN_8(\calC)$.

Set $\calE\coloneqq\calC^{\star2}$.
For $u,v\in\calE$, the vector $u\star v$ lies in
$\calC^{\star4}$.
Since $p\in(\calC^{\star4})^\perp$, we have $\sum_{j=1}^n p_j u_jv_j\equiv0\pmod2$.
Taking $v=u$ shows that $\wt_p(u)$ is even.
Define $\varphi_p(u)\coloneqq\wt_p(u)/2\pmod2$.
For any $u,v\in\calE$,
\begin{equation*}
    \frac{\wt_p(u+v)}{2}
    =\frac{\wt_p(u)}{2}+\frac{\wt_p(v)}{2}
    -\sum_{j=1}^n p_j u_jv_j.
\end{equation*}
The last term vanishes modulo $2$, so $\varphi_p$ is linear on $\calE$.

Choose a basis
$e^{(1)},\ldots,e^{(k_{\calE})}$ of $\calE$ and solve
$e^{(\ell)}\cdot t=\varphi_p(e^{(\ell)})$ for $\ell\in[k_{\calE}]$.
The equations are linearly independent, so a solution exists.
By linearity, $t\cdot u=\varphi_p(u)$ for every $u\in\calE$.

Define $a^{(1)}\coloneqq p+2t$.
Then, for every $u\in\calE$,
\begin{equation*}
    \begin{aligned}
        \wt_{a^{(1)}}(u)
        &=\wt_p(u)+2\sum_j t_j u_j \\
        &\equiv2\wt_p(u)\equiv0\pmod4.
    \end{aligned}
\end{equation*}
In particular, taking $u=x\star y$ with $x,y\in\calC$ gives
$\sum_{j=1}^n a_j^{(1)}x_jy_j\equiv0\pmod4$.
Taking $y=x$ shows that
$\wt_{a^{(1)}}(x)$ is divisible by $4$ for every $x\in\calC$.
Define $\psi(x)\coloneqq\wt_{a^{(1)}}(x)/4\pmod2$ for $x\in\calC$.
For $x,y\in\calC$,
\begin{equation*}
    \frac{\wt_{a^{(1)}}(x+y)}4
    =\frac{\wt_{a^{(1)}}(x)}4+\frac{\wt_{a^{(1)}}(y)}4
    -\frac12\sum_{j=1}^n a_j^{(1)}x_jy_j\pmod2.
\end{equation*}
The last term vanishes modulo $2$, so $\psi$ is linear over $\FF_2$.

Choose a basis
$g^{(1)},\ldots,g^{(k)}$ of $\calC$ and solve
$g^{(\ell)}\cdot z=\psi(g^{(\ell)})$ for $\ell\in[k]$.
Again, a solution exists, and by linearity, $z\cdot x=\psi(x)$ for every $x\in\calC$.
Thus, the vector $a=p+2t+4z$ satisfies
\begin{equation*}
    \wt_a(x)\equiv\wt_{a^{(1)}}(x)+4\psi(x)\equiv0\pmod8
    \qquad\text{for every }x\in\calC.
\end{equation*}
Hence,
$a\in\calN_8(\calC)$ and
$a\equiv p\pmod2$.

A basis of $\calE$ can be extracted from the pairwise products
$g^{(i)}\star g^{(j)}$ by Gaussian elimination.  Both linear systems
have at most $n$ independent equations in $n$ variables, so the
construction has polynomial complexity.
\end{proof}

\begin{proof}[Proof of \cref{lem:prescribed-coefficients}]
For each $\ell\in S$, choose
$p^{(\ell)}\in\calC$ such that $p^{(\ell)}_S=e_\ell$.
Since $\calC$ has the $4$-multiplication property, $\calC^{\star4}\subseteq\calC^\perp$, and $p^{(\ell)}\in\calC$, which implies that $p^{(\ell)}\in(\calC^{\star4})^\perp$.
By \cref{lem:parity-coefficient-lift}, there exists
$a^{(\ell)}\in\calN_8(\calC)$ such that $a^{(\ell)}\equiv p^{(\ell)}\pmod2$.
In particular, $a^{(\ell)}_S\equiv e_\ell\pmod2$.

Let $A\in\ZZ_8^{K\times n}$ have the vectors $a^{(\ell)}$ as its
rows, indexed by $\ell\in S$.  Then, $A_S\equiv I_K\pmod2$.
Hence, $\det(A_S)\equiv1\pmod2$, which implies that $\det(A_S)$ is odd and $A_S$ is invertible over $\ZZ_8$.

Let $M\coloneqq A_S^{-1}A$.
Then, $M_S=A_S^{-1}A_S=I_K$.
For each $\ell\in S$, denote the $\ell$-th row of $M$ by
$m^{(\ell)}$.
Thus, $m^{(\ell)}_S=e_\ell$.

For every $c\in\calC$,
\begin{align*}
    \wt_{m^{(\ell)}}(c) &=\sum_{j=1}^n m_j^{(\ell)}c_j 
    =\sum_{i\in S}(A_S^{-1})_{\ell i}
      \sum_{j=1}^n a_j^{(i)}c_j \\
    &=\sum_{i\in S}(A_S^{-1})_{\ell i}
      \wt_{a^{(i)}}(c) \\
    &\equiv 0\pmod8,
\end{align*}
where the last congruence follows from
$a^{(i)}\in\calN_8(\calC)$.

Multiplying all the coefficients in $m^{(\ell)}$ by $-1$ preserves
divisibility by $8$, since $\wt_{-m^{(\ell)}}(c)=-\wt_{m^{(\ell)}}(c)\equiv0\pmod8$
for every $c\in\calC$.

Since $m^{(\ell)}_S=e_\ell$, write $m^{(\ell)}=\bigl(e_\ell,b'^{(\ell)}\bigr)$
for some $b'^{(\ell)}\in\ZZ_8^{\overline{S}}$.
Then, $-m^{(\ell)}=\bigl(-e_\ell,-b'^{(\ell)}\bigr)\in\calN_8(\calC)$.
Setting $b^{(\ell)}\coloneqq-b'^{(\ell)}$,
we conclude that $\calC$ is
$(-e_\ell,b^{(\ell)})$-weighted $8$-divisible for every
$\ell\in S$.

The required vectors are computable in polynomial time by the two
binary linear systems in the proof of \cref{lem:parity-coefficient-lift} and the
subsequent matrix operations.
\end{proof}

\subsection{Algebraic-geometry codes with the \texorpdfstring{$15$}{15}-multiplication property}
\label{subsec:ag-fifteen-multiplication}

We next construct codes over $\FF_{1024}$ with the
$15$-multiplication property.  The construction uses the same
Galois tower and canonical divisor that underlie the
self-dual-code construction in
\cite[Theorem~8.4.9]{stichtenoth2009algebraic} and the
multiplication-property construction in
\cite[Theorem~3.6]{nguyen2025good}, with a divisor chosen for a
sixteen-product identity.  It is also the $q=1024$, $\tau=16$
instance of the AG construction in
\cite{sanjose2026asymptotically}.
Appendix~\ref{app:ag-codes} collects the function-field definitions,
parameter bounds, and multiplication criterion used below.

\begin{lemma}
\label{lem:good-ag-fifteen-multiplication-code}
There is an explicit sequence of $[n_{\calA_i},k_{\calA_i},\dist(\calA_i)]$ AG codes $\{\calA_i\}$ over $\FF_{1024}$.
Each $\calA_i$ satisfies the $15$-multiplication property,
$\calA_i^{\star15}\subseteq\calA_i^\perp$.
For all sufficiently large $i$, the parameters satisfy
\begin{equation}
    \label{eq:direct-outer-limits}
    \frac{k_{\calA_i}}{n_{\calA_i}}
       \longrightarrow \frac{17}{496},
    \qquad
    \frac{\dist(\calA_i)}{n_{\calA_i}}
       \geq\frac{463}{496},
    \qquad
    \frac{\dist(\calA_i^\perp)}{n_{\calA_i}}
       \geq\frac1{496}.
\end{equation}
\end{lemma}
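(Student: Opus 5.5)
The plan is the standard Riemann--Roch construction on the Garcia--Stichtenoth tower, with the divisor chosen so that a sixteenfold product lands in the dual. Since $1024=32^2$, the second Garcia--Stichtenoth tower $F_0\subseteq F_1\subseteq\cdots$ over $\FF_{1024}$ attains the Drinfeld--Vl\u{a}du\c{t} bound $\sqrt{1024}-1=31$. Concretely, $g_i\to\infty$, and we can pick $n_i$ rational places with $g_i/n_i\to1/31$. We must keep away from the finitely many special rational places lying over $x\in\{0,\infty\}$ and the like, but these are $o(n_i)$.

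We set $D=P_1+\cdots+P_{n_i}$ and $\calA_i=C_L(D,G)$ for a divisor $G$ with $\supp G\cap\supp D=\emptyset$. As in \cite[Theorem~8.4.9]{stichtenoth2009algebraic}, we take a differential $\eta$ with $v_{P_j}(\eta)=-1$ and $\operatorname{res}_{P_j}\eta=1$ for all $j$. Stichtenoth constructs such an $\eta$ from the defining equations of the tower, of the form $\eta=dt/(t^{q}-t)$ up to a unit. Then $\calA_i^\perp=C_L(D,D-G+(\eta))$. Multiplicativity of Riemann--Roch spaces gives $\calA_i^{\star15}\subseteq C_L(D,15G)$. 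Hence it suffices that
\[
16G\le A\coloneqq(\eta)+D,
\]
a divisor of degree $2g_i-2+n_i$ whose support is disjoint from $D$.

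The main obstacle is choosing $G$ with $16G\le A$ and $\deg G=(2g_i-2+n_i)/16-o(n_i)$. Taking $G\coloneqq\sum_P\lfloor v_P(A)/16\rfloor P$ gives $16G\le A$ automatically. The loss in degree is at most $15\sum_{P\in\supp A}\deg P$.

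So the key step is to show that $\supp A$ consists of places with total degree $o(n_i)$. I would first try to prove this from the explicit ramification structure of the tower: the zeros and poles of $\eta$ lie only over the few places of $F_0$ above $0,\infty$ and the roots of $t^q-t$ that are not among the $P_j$. In the tower these are totally, or highly, ramified, so their multiplicities are large and their number is small. As in Nguyen \cite[Theorem~3.6]{nguyen2025good}, if needed we can also multiply $\eta$ by a suitable function, or pass to a subsequence of levels, so that $v_P(A)\equiv0\pmod{16}$ on this small set.

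The parameters then follow from Riemann--Roch, using $2g_i-2<\deg G<n_i$ for large $i$:
\begin{itemize}
\item The rate is $k=\deg G+1-g_i$, so $k/n\to\frac1{16}+\frac{2}{16\cdot31}-\frac1{31}=\frac{17}{496}$.
\item The distance satisfies $\dist(\calA_i)\ge n-\deg G$, so $\dist(\calA_i)/n\ge1-\frac1{16}-\frac1{248}=\frac{463}{496}$.
\item The dual distance satisfies $\dist(\calA_i^\perp)\ge\deg G-(2g_i-2)$, so $\dist(\calA_i^\perp)/n\ge\frac1{16}+\frac1{248}-\frac2{31}=\frac1{496}$.
\end{itemize}
The inequalities on the distances become the claimed bounds after absorbing the $o(n_i)$ losses into the limits. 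If the losses are not exactly zero, this needs slightly adjusted constants, or an exact-divisibility argument as sketched above. Explicitness comes from the explicit tower equations and the explicit $\eta$.
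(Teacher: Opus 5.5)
Your construction is the paper's: the same optimal tower over $\FF_{1024}=\FF_{32^2}$ and the same residue-normalized canonical divisor. Once that divisor is written out, your $G=\sum_P\lfloor v_P((\eta)+D)/16\rfloor P$ is exactly the paper's $G=(2\alpha-1)A+\lfloor(\beta-2)/16\rfloor B$. As written, however, the proposal does not prove the lemma.

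The structural input is only announced (``I would first try to prove this''). You need a differential $\eta$ with simple poles and residue $1$ at every $P_j$, whose remaining divisor is explicitly controlled. The paper takes this from the tower, following Stichtenoth's self-dual construction and Nguyen's Theorem~3.6, in the precise form $(\eta)=R=(32\alpha-2)A+(\beta-2)B-D$. Here $A,B\geq0$ are disjoint from $D$ and $\alpha\deg A=\beta\deg B=n/31$, so in particular $g-1=\frac{n}{31}-\deg A-\deg B$.

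Your fallback of multiplying $\eta$ by a function to force $v_P\equiv0\pmod{16}$ is not safe. It rescales the residue at each $P_j$ by the value of that function. The untwisted duality $\calA^\perp=C_L(D,D-G+(\eta))$ therefore survives only if the function equals $1$ at all $n$ evaluation places, and the multiplication generally introduces new zeros and poles.

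More substantively, an unquantified $o(n)$ rounding loss gives only the $\liminf$ versions of the two distance bounds, not the stated inequalities for all large $i$. (The rate limit $17/496$ is fine.) Put $\lambda\coloneqq\deg((\eta)+D)/16-\deg G\geq0$. Then $\deg G-(2g-2)=\frac{n}{16}-\frac{15}{8}(g-1)-\lambda$. Since $\frac{n}{16}-\frac{n}{496}=\frac{15}{8}\cdot\frac{n}{31}$, the designed dual distance $\deg G-(2g-2)$ is at least $n/496$ if and only if $\lambda\leq\frac{15}{8}\bigl(\frac{n}{31}-(g-1)\bigr)$. Both sides are $o(n)$, and $1/496$ is exactly the limiting value, so the two quantities must be compared directly.

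Similarly, $n-\deg G\geq\frac{465n}{496}-\frac{g-1}{8}$ gives $463/496$ only if $g-1\leq n/31$ exactly. The limit $g/n\to1/31$ alone does not provide this.

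With the explicit $R$, the genus deficit is $\frac{n}{31}-(g-1)=\deg A+\deg B$. Writing $\{\cdot\}$ for the fractional part, $\lambda=\frac78\deg A+\bigl\{\frac{\beta-2}{16}\bigr\}\deg B\leq\frac78\deg A+\frac{15}{16}\deg B$, so both conditions hold. This is the paper's computation $\deg G/n\leq33/496$ and $(\deg G+2-2g)/n=\frac1{31\alpha}+\frac{\lfloor(\beta-2)/16\rfloor+2}{31\beta}\geq\frac1{496}$. In short, the $-2$ coefficients of the canonical divisor pay for the rounding.
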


\begin{proof}
We use the standard form of the optimal Galois tower over
$\FF_{1024}=\FF_{32^2}$~\cite{garcia1995tower}.
At tower level $i$, let the function field
$\mathcal F_i/\FF_{1024}$ have genus $g_i$, and let $D_i$ be the sum
of all $n_{\calA_i}$ distinct rational places.
The number of rational places grows with $i$, $n_{\calA_i} = 31^{O(i)}$.
There are positive
divisors $A_i,B_i$, disjoint from $D_i$, and even integers
$\alpha_i,\beta_i\to\infty$ such that
$\alpha_i\deg A_i=\beta_i\deg B_i=n_{\calA_i}/31$ and
\begin{equation}
    g_i=1+\frac{n_{\calA_i}}{31}
       \left(1-\frac1{\alpha_i}-\frac1{\beta_i}\right).
    \label{eq:direct-tower-genus}
\end{equation}
In particular, $g_i/n_{\calA_i}\to1/31$.
There is also a suitable canonical divisor
$R_i=(32\alpha_i-2)A_i+(\beta_i-2)B_i-D_i$.
Suppress the index $i$ and define $G=(2\alpha-1)A+\lfloor(\beta-2)/16\rfloor B$.
Then, $16G\leq R+D$, because $16(2\alpha-1)\leq32\alpha-2$ and
$16\lfloor(\beta-2)/16\rfloor\leq\beta-2$.

For the Riemann-Roch space
$\calL(G)=\{f\in\mathcal F_i^\times:(f)+G\geq0\}\cup\{0\}$, define the AG code as
\begin{equation}
    \calA(G)\coloneqq
    \left\{(f(P))_{P\in\supp D}:
       f\in\calL(G)\right\}
    \subseteq\FF_{1024}^{n_{\calA}}.
    \label{eq:direct-outer-code}
\end{equation}
We simply write $\calA$ when the context is clear.
The inequality $16G\leq R+D$ gives $15G\leq R+D-G$.
By the star-product and inclusion properties of
AG codes~\cite[Fact~3.4]{nguyen2025good},
\begin{equation*}
    \calA(G)^{\star 15}
       \subseteq\calA(15G)
       \subseteq\calA(R+D-G).
\end{equation*}
By AG-code duality~\cite[Theorem~2.2.7 and Proposition~2.2.10]
{stichtenoth2009algebraic}, the latter code is the dual of
$\calA(G)$, namely $\calA(R+D-G)=\calA(G)^\perp$.
Thus, $\calA(G)^{\star15}\subseteq\calA(G)^\perp$, which is exactly the $15$-multiplication property.
Equivalently, for
any $u^{(1)},\ldots,u^{(16)}\in\calA$,
\begin{equation}
    \sum_{P\in\supp D}
       \prod_{a=1}^{16}u_P^{(a)}=0.
    \label{eq:sixteen-product-identity}
\end{equation}

It remains to prove the asymptotic goodness of the parameters.
Set $\Delta\coloneqq\deg G+2-2g$.
The formulas for $g$ and $G$ give
\begin{align}
    \frac{\deg G}{n_{\calA}}
       &=\frac{2\alpha-1}{31\alpha}
         +\frac1{31\beta}
            \left\lfloor\frac{\beta-2}{16}\right\rfloor
       \leq\frac{33}{496},
       \label{eq:direct-degree-limit}\\
    \frac{\Delta}{n_{\calA}}
       &=\frac1{31\alpha}
         +\frac{\lfloor(\beta-2)/16\rfloor+2}{31\beta}
       \geq\frac1{496}.
       \label{eq:direct-dual-relative-bound}
\end{align}
Here, the inequalities follow from $x-1\leq\lfloor x\rfloor\leq x$.
As $\alpha,\beta\to\infty$, these ratios converge to $33/496$ and $1/496$, respectively.
Together with $g/n_{\calA}\to1/31$, this gives $2g-1\leq\deg G<n_{\calA}$ for all sufficiently large tower levels.
For these levels, the standard AG-code bounds give
$k_{\calA}\coloneqq\dim_{\FF_{1024}}\calA=\deg G+1-g$,
$\dist(\calA)\geq n_{\calA}-\deg G$, and
$\dist(\calA^\perp)\geq\Delta$.
Consequently,
\begin{equation}
    \frac{k_{\calA}}{n_{\calA}}
       \longrightarrow\frac{17}{496},
    \qquad
    \frac{\dist(\calA)}{n_{\calA}}
       \geq\frac{463}{496},
    \qquad
    \frac{\dist(\calA^\perp)}{n_{\calA}}
       \geq\frac1{496}.
\end{equation}

The defining equations of the tower and the divisors are explicit.
Bases of the Riemann-Roch spaces can be computed by standard
algorithms~\cite{hess2002computing}.
Thus, the codes $\calA_i$ and the integers $\Delta_i$ are explicit.
\end{proof}

\subsection{From algebraic-geometry codes to binary codes}\label{sec:binary-embedding}
Now, we use the five-input specialization of the alphabet-reduction map in \cite{sanjose2026asymptotically} to convert the AG codes' $15$-multiplication property into the binary $4$-multiplication property.
Prescribing weighted $8$-divisibility coefficients then yields addressable transversal $T$ without a subsequent correction (\cref{lem:prescribed-coefficients,lem:weighted-divisibility-transversal-T}).
In contrast, the addressable construction in \cite[Remark~6.8]{sanjose2026asymptotically} uses an eight-input map through hierarchy lowering.

\begin{lemma}
\label{lem:good-weighted-divisible-code}
Let $\{\calA_i\}$ be the AG code family from
\cref{lem:good-ag-fifteen-multiplication-code}.
There are a constant $1\leq L\leq\sum_{j=1}^5\binom{10}{j}$
and an explicitly computable injective $\FF_2$-linear map
$\sigma:\FF_{1024}\to\FF_2^L$, 
such that the binary codes $\calC_i\coloneqq\sigma^{\oplus n_{\calA_i}}(\calA_i)
\subseteq\FF_2^{Ln_{\calA_i}}$ satisfy the $4$-multiplication property.
Their dimensions and distances satisfy
\begin{equation}
    \dim_{\FF_2}\calC_i=10\dim_{\FF_{1024}}\calA_i,
    \qquad
    \dist(\calC_i)\geq\dist(\calA_i).
    \label{eq:binary-code-parameters}
\end{equation}
Moreover, for all $x,y\in\FF_{1024}$,
\begin{equation}
    \sigma(x)\cdot\sigma(y)=\Tr(xy),
    \qquad
    \wt(\sigma(x))\equiv\Tr(x)\pmod2.
    \label{eq:sigma-trace-identities}
\end{equation}
\end{lemma}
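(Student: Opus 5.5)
The approach is to choose $\sigma$ so that the binary fivefold overlap $|\sigma(x_1)\star\cdots\star\sigma(x_5)|$ reduces modulo $2$ to the absolute trace of a degree-$16$ Frobenius-twisted product of $x_1,\dots,x_5$. The sixteen-product identity \eqref{eq:sixteen-product-identity} then kills it after summing over places. Concretely, fix the target $\FF_2$-multilinear form
\begin{equation*}
  \Omega(x_1,\dots,x_5)\coloneqq\sum_{\pi}\Tr\!\left(x_{\pi(1)}x_{\pi(2)}x_{\pi(3)}^{2}x_{\pi(4)}^{4}x_{\pi(5)}^{8}\right),
\end{equation*}
where $\pi$ runs over the distinct placements of the arguments into the exponent pattern $(1,1,2,4,8)$. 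That is, the sum is over $S_5$ modulo the swap of the first two slots, and it makes $\Omega$ symmetric. Each summand is $\FF_2$-multilinear, since Frobenius is $\FF_2$-linear.

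Since $x^{2^k}$ is a product of $2^k$ copies of $x$ and $1+1+2+4+8=16$, we have
\begin{equation*}
  \sum_P\Omega\bigl(u^{(1)}_P,\dots,u^{(5)}_P\bigr)=\Tr\Bigl(\textstyle\sum_P(\text{product of }16\text{ entries of codewords of }\calA)\Bigr)=0
\end{equation*}
by \eqref{eq:sixteen-product-identity}. On the diagonal, $\Tr(x^{16})=\Tr(x)$, which matches the required parity of $\wt(\sigma(x))$.

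Next I realize $\Omega$ as an overlap form. Write $\sigma(x)=(\lambda_1(x),\dots,\lambda_L(x))$ with linear functionals $\lambda_s$. Then
\begin{equation*}
  |\sigma(x_1)\star\cdots\star\sigma(x_5)|\equiv\sum_{s}\prod_{a=1}^5\lambda_s(x_a)\pmod2 .
\end{equation*}
This is the $\FF_2$-linear image of the multiset $\{\lambda_s\}$ (taken modulo $2$) under $\lambda\mapsto\lambda^{\otimes5}$. Fix a basis of $\FF_{1024}$ with coordinate functionals $y_1,\dots,y_{10}$, and for $T\subseteq[10]$ put $y_T\coloneqq\sum_{j\in T}y_j$. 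Expanding $\Omega$ in these coordinates writes it as a sum of symmetrized monomials $y_{j_1}\odot\cdots\odot y_{j_5}$, possibly with repeated indices. By inclusion--exclusion over subsets, $\sum_{T\subseteq S}y_T^{\otimes5}$ equals the sum of all terms using every index of $S$ at least once. Peeling off $|S|=5$, then $4$, and so on down to $1$ expresses any form of the shape we need as $\sum_{T\in\mathcal T}y_T^{\otimes5}$ with $\mathcal T\subseteq\{T:1\le|T|\le5\}$. We then set $\sigma=(y_T)_{T\in\mathcal T}$, which gives $L\le\sum_{j=1}^5\binom{10}{j}$.

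The consequences are then routine. The fivefold overlap identity gives the $4$-multiplication property of $\calC_i$: for $c^{(a)}=\sigma^{\oplus n}(u^{(a)})$ the overlap is $\sum_P\Omega(u^{(a)}_P)=0$. The trace identities \eqref{eq:sigma-trace-identities} follow by specializing arguments, and a nondegenerate trace pairing forces injectivity of $\sigma$. This gives $\dim\calC_i=10\dim\calA_i$, and $\dist(\calC_i)\ge\dist(\calA_i)$ holds because $\sigma(x)\ne0$ for $x\ne0$.

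The main obstacle is consistency on repeated arguments. A form $\sum_s\lambda_s^{\otimes5}$ is determined as a function on all $5$-tuples, including $(x,x,x,x,y)$ and $(x,x,x,y,y)$, and these must equal $\Tr(xy)$. So it is not enough for $\Omega$ to be symmetric; its restrictions to repeated arguments must also be realizable. I would first verify directly, using $x^{2}$-linearity, that $\Omega(x,x,x,x,y)=\Tr(xy)$, and that the lower-order reductions of $\Omega$ are themselves trace forms of degree-$16$ twisted products. Where the inclusion--exclusion expansion produces mismatched lower-order terms, I would absorb them by adjusting the choice of $\mathcal T$ on subsets of size $\le4$. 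Failing a clean closed form, I would solve the linear system over $\FF_2$ for the indicator of $\mathcal T$ directly; this is a finite computation, which is what makes $\sigma$ explicitly computable.
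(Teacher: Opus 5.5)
\textbf{There is a genuine gap: your target form $\Omega$ cannot work, and this is the step the lemma hinges on.}

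\emph{$\Omega$ has the wrong diagonal.} On repeated arguments the symmetrization cancels. In $\Omega(x,x,x,x,y)$ the argument $y$ receives exponent $8,4,2,1$ in $12,12,12,24$ of the $60$ placements, respectively. Hence $\Omega(x,x,x,x,y)=0$, and likewise $\Omega(x,\dots,x)=60\,\Tr(x^{16})=0$. Your remark that the diagonal is $\Tr(x^{16})=\Tr(x)$ overlooks that there are $60$ identical summands. So any $\sigma$ realizing $\Omega$ would give $\sigma(x)\cdot\sigma(y)=0$ for all $x,y$. That contradicts \eqref{eq:sigma-trace-identities} and destroys your injectivity argument.

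\emph{$\Omega$ is not realizable at all.} Since $\lambda_s(x)^2=\lambda_s(x)$, every form $\sum_s\prod_a\lambda_s(x_a)$ has a value that depends only on the set of distinct arguments. But
\[
\Omega(a,a,b,c,d)=\sum\Tr\bigl(a\,b^{f_b}c^{f_c}d^{f_d}\bigr),
\]
summed over permutations $(f_b,f_c,f_d)$ of $(1,2,4)$. This is not symmetric under $a\leftrightarrow b$: it contains $\Tr(ab^2cd^4)$, whose Frobenius orbit does not occur in $\Omega(b,b,a,c,d)=\Omega(a,b,b,c,d)$. So no choice of $\mathcal{T}$ realizes $\Omega$. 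The check you planned to do first, $\Omega(x,x,x,x,y)=\Tr(xy)$, already fails.

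\emph{The proposed repair does not close the gap.} Editing $\mathcal{T}$ on sets of size at most four changes the realized form $\Psi=\sum_{T\in\mathcal{T}}y_T^{\otimes5}$ into something other than $\Omega$. Your only argument for $\sum_P\Psi(u^{(1)}_P,\ldots,u^{(5)}_P)=0$ was that $\Psi=\Omega$ is a trace of sixteen-fold products, so that argument no longer applies. Likewise, ``solving the linear system directly'' presupposes a target form with three properties at once:
\begin{itemize}
\item it is a trace of degree-$16$ products, so it vanishes on $\calA$;
\item it depends only on the set of distinct inputs;
\item it equals $\Tr(xy)$ and $\Tr(x)$ on the relevant diagonals.
\end{itemize}
The existence of such a form is precisely what must be proved.

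\emph{The missing idea is the paper's form.} The paper uses $\Phi=\Tr\bigl(\prod_{|S|\text{ odd}}\sum_{j\in S}x_j\bigr)$.
\begin{itemize}
\item It is a product of $5+10+1=16$ linear forms, so vanishing over $\calA$ is immediate.
\item Repeated inputs collapse the product to the square of the analogous product over the distinct inputs. For example, $\Phi(x,x,y,z,w)=\Tr\bigl(xyzw(x+y+z)(x+y+w)(x+z+w)(y+z+w)\bigr)$.
\item This gives set-dependence, $\Phi(x,x,x,x,y)=\Tr(xy)$, and $\Phi(x,\dots,x)=\Tr(x)$.
\end{itemize}
Expanding, one finds
\[
\Phi=\Omega+\sum_j\Tr\bigl(x_j^{8}\prod_{k\neq j}x_k^{2}\bigr)+\sum_{j<k}\Tr\bigl(x_j^{2}x_k^{2}\prod_{l\neq j,k}x_l^{4}\bigr).
\]
Your $\Omega$ symmetrizes only the exponent pattern $(1,1,2,4,8)$. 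The two missing patterns, $(8,2,2,2,2)$ and $(4,4,4,2,2)$, are exactly what restore set-dependence and the correct diagonal.

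With $\Phi$ in place of $\Omega$, your triangular decomposition and the remaining consequences go through as in the paper: the $4$-multiplication property, the trace identities, injectivity, and the dimension and distance bounds.
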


Together with \cref{eq:direct-outer-limits}, these parameter relations imply that the family $\{\calC_i\}$ is asymptotically good.
Its rate converges to $85/(248L)$, and its relative distance is at least $463/(496L)$ for all sufficiently large $i$.

\begin{proof}
We first give the construction of the binary embedding map.
For $x_1,\ldots,x_5\in\FF_{1024}$, define 
\begin{equation}
    \Phi(x_1,\ldots,x_5)
       \coloneqq\Tr\!\left(
          \prod_{\substack{S\subseteq[5]\\|S|\text{ odd}}}
             \left(\sum_{j\in S}x_j\right)\right).
    \label{eq:quintic-trace-form}
\end{equation}
The map $\Phi$ is symmetric, since its definition is invariant under permutations of the five inputs.
We show that it is $\FF_2$-linear in $x_1$ in \cref{appendix:properties-of-binary-encoding}; symmetry then gives linearity in every input.
In \cref{appendix:properties-of-binary-encoding}, we also prove that the value of $\Phi$ depends only on the set of its distinct inputs,
regardless of their multiplicities.
In particular, $\Phi(x,x,x,x,y)=\Phi(x,x,x,y,y)=\Tr(xy)$ and $\Phi(x,x,x,x,x)=\Tr(x)$.

Choose a binary basis $e_1,\ldots,e_{10}$ of $\FF_{1024}$.
Every $x\in\FF_{1024}$ has a unique representation $x=\sum_{i=1}^{10}\chi_i e_i$ with $\chi_i\in\FF_2$.
For each subset $S\subseteq\{1,\ldots,10\}$, define
\begin{equation}
    \ell_S(x)\coloneqq\sum_{i\in S}\chi_i\pmod2.
    \label{eq:binary-coordinate-functional}
\end{equation}
The map $\ell_S$ is $\FF_2$-linear.
For any five basis vectors, $\prod_{a=1}^5\ell_S(e_{i_a})$ equals $1$ if $\{i_1,\ldots,i_5\}\subseteq S$, and $0$ otherwise.
Like $\Phi$, this product depends only on the set of distinct basis
vectors, regardless of their order or multiplicities.
We seek coefficients $\zeta_S\in\FF_2$ such that
\begin{equation}
    \Phi(x_1,\ldots,x_5)
       =\sum_{S\subseteq\{1,\ldots,10\}}
          \zeta_S\prod_{a=1}^5\ell_S(x_a).
    \label{eq:pure-fifth-power-decomposition}
\end{equation}
First, set $\zeta_S=0$ when $|S|=0$ or $|S|>5$.
For $S_e=\{i_1,\ldots,i_5\}$ with five distinct indices, set $\zeta_{S_e}=\Phi(e_{i_1},\ldots,e_{i_5})$.
Next, for $S_e=\{i_1,\ldots,i_4\}$ with four distinct indices, set
\begin{equation}
    \zeta_{S_e}
       =\Phi(e_{i_1},e_{i_1},e_{i_2},e_{i_3},e_{i_4})
          +\sum_{S_e\subsetneq S'}\zeta_{S'}\pmod2.
    \label{eq:quintic-subset-recursion}
\end{equation}
Apply the same construction successively when $|S_e|=3,2,1$.
This gives \cref{eq:pure-fifth-power-decomposition} on all tuples of basis vectors.
By $\FF_2$-multilinearity, the identity holds for all $x_1,\ldots,x_5\in\FF_{1024}$.

Enumerate the subsets with $\zeta_S=1$ as $S_1,\ldots,S_L$, and set $\sigma_s:=\ell_{S_s}$ for $s\in[L]$.
Define $\sigma(x)\coloneqq(\sigma_s(x))_{s=1}^L$.
Each $\sigma_s$ is $\FF_2$-linear, so $\sigma$ is also $\FF_2$-linear.
The number of coordinates satisfies $L\leq\sum_{j=1}^5\binom{10}{j}=637$.
By \cref{eq:pure-fifth-power-decomposition},
\begin{equation}
    \sum_{s=1}^L\prod_{a=1}^5\sigma_s(x_a)
       =\Phi(x_1,\ldots,x_5).
    \label{eq:quintic-embedding-identity}
\end{equation}
Taking the inputs $x,x,x,y,y$ or $x,x,x,x,x$ and using the repeated-input identities gives the two identities in \cref{eq:sigma-trace-identities}, respectively.
Finally, if $\sigma(x)=\sigma(y)$, then
$\Tr((x-y)v)=0$ for every $v\in\FF_{1024}$.
Nondegeneracy of the trace pairing implies $x=y$, so $\sigma$ is injective.

Next, we apply $\sigma$ coordinatewise to the AG codes $\calA_i$ to obtain the binary codes $\calC_i$.
Every $c\in\calC_i$ has the form $c=(\sigma(u_1),\ldots,\sigma(u_{n_{\calA_i}}))$ for some $u\in\calA_i$.
For any $c^{(1)},\ldots,c^{(5)}\in\calC_i$, let
$u^{(1)},\ldots,u^{(5)}\in\calA_i$ be the corresponding codewords.
By \cref{eq:quintic-embedding-identity}, the Hamming weight of their
fivefold component-wise product satisfies
\begin{equation}
    \left|c^{(1)}\star\cdots\star c^{(5)}\right|
       \equiv\sum_{p=1}^{n_{\calA_i}}
          \Phi(u_p^{(1)},\ldots,u_p^{(5)})\pmod2.
    \label{eq:blockwise-quintic-overlap}
\end{equation}
The $15$-multiplication property of $\calA_i$ implies that the coordinate sum of any product of sixteen codewords vanishes.
Upon expanding the product in \cref{eq:quintic-trace-form}, every
monomial inside the trace is a product of sixteen inputs, allowing
repetitions.
Hence, the coordinate sum of each monomial vanishes.
By linearity of the trace, \cref{eq:blockwise-quintic-overlap} yields $\left|c^{(1)}\star\cdots\star c^{(5)}\right|\equiv0\pmod2$.
Thus, $\calC_i$ has the $4$-multiplication property.

Finally, injectivity of $\sigma$ gives
$\dim_{\FF_2}\calC_i=10\dim_{\FF_{1024}}\calA_i$.
Every nonzero field coordinate has a nonzero binary image, so
$\dist(\calC_i)\geq\dist(\calA_i)$.
This proves \cref{eq:binary-code-parameters}.
\end{proof}

\subsection{Putting everything together}
In the previous sections, we show that there is a family of AG codes with asymptotically good parameters and the 15-multiplication property.
From these AG codes, we can explicitly construct binary codes with asymptotically good encoding rates and distances, as well as 4-multiplication property.
From \cref{lem:prescribed-coefficients}, for a binary code with the 4-multiplication property, if a suitable coordinate set $S$ exists, then the code will be weighted 8-divisible, hence giving a CSS code with addressable transversal $T$.
In this section, we show how to construct such a coordinate set $S$ and establish a family of 
asymptotically good CSS codes with fully addressable transversality.
\begin{theorem}
\label{thm:direct-good-transversal-T}
There is an explicit asymptotically good family of binary CSS codes
$\calQ_i$ with parameters $[[N_i,K_i,D_i]]$ and admitting fully transversal $T$ gate.
More precisely,  the parameters of the CSS code $\calQ_i$ satisfy
\begin{equation}
    \frac{K_i}{N_i}\longrightarrow\frac1{991L},
    \qquad
    \liminf_{i\to\infty}\frac{D_{X,i}}{N_i}
       \geq\frac{925}{991L},
    \qquad
    \liminf_{i\to\infty}\frac{D_{Z,i}}{N_i}
       \geq\frac1{991L},
    \label{eq:direct-quantum-parameters}
\end{equation}
and for a fixed encoding isometry $E_i$, any $c \in \ZZ_8^{K_i}$ and a vector $b(c) \in \ZZ_8^{N_i}$ depending on $c$, we have $T^{b(c)}E_i=E_iT^c$.
\end{theorem}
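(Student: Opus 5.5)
The plan is to start from the binary codes $\calC_i=\sigma^{\oplus n}(\calA_i)$ of \cref{lem:good-weighted-divisible-code}, with $n=n_{\calA_i}$. Most field coordinates keep their full $L$-bit image. A small set $J$ of $m\approx n/992$ field coordinates is restricted to a one-dimensional binary subspace and compressed to a single bit. These $m$ compressed bits serve as the set $S$ in \cref{lem:prescribed-coefficients,lem:weighted-divisibility-transversal-T}.

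\textbf{Construction and the $4$-multiplication property.} Fix $\gamma\in\FF_{1024}$ with $\Tr(\gamma)=1$. Choose an information set $I$ of $\calA_i$ and a subset $J\subseteq I$ with $|J|=m=\lfloor n/992\rfloor$; this is possible since $k_{\calA_i}/n\to17/496>1/992$. Let
\[
\calA'\coloneqq\{u\in\calA_i:u_P\in\FF_2\gamma\text{ for all }P\in J\},
\]
and write $u_P=t_P\gamma$ for $P\in J$. Because $J$ lies in an information set, the values $(t_P)_{P\in J}$ are arbitrary and $\dim_{\FF_2}\calA'=10(k_{\calA_i}-m)+m$. Define the binary code $\calC'$ by placing $\sigma(u_P)$ on each block $P\notin J$ and the single bit $t_P$ on each $P\in J$.

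For the $4$-multiplication property, note that on a block $P\in J$ the uncompressed fivefold overlap equals $t^{(1)}_P\cdots t^{(5)}_P\,\wt(\sigma(\gamma))$. By \cref{eq:sigma-trace-identities}, $\wt(\sigma(\gamma))\equiv\Tr(\gamma)=1\pmod 2$, so this overlap agrees modulo $2$ with that of the compressed coordinate. Hence $|c^{(1)}\star\cdots\star c^{(5)}|$ for $\calC'$ is congruent to the corresponding overlap in $\calC_i$, which is even. Taking $S=J$, we have $\{c_S:c\in\calC'\}=\FF_2^S$. \Cref{lem:prescribed-coefficients} then supplies the vectors $b^{(\ell)}$, and \cref{lem:weighted-divisibility-transversal-T} gives the CSS code $\calQ_i$ with a fixed encoding $E_i$ and $T^{b(c)}E_i=E_iT^c$. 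Its parameters are $K_i=m$ and $N_i=L(n-m)$, so $K_i/N_i\to1/(991L)$.

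\textbf{Distances.} For the $X$-distance, every nonzero codeword of $\calC'$ has a nonzero field coordinate in each of at least $\dist(\calA_i)$ positions. Each such position contributes at least one nonzero bit, so $d(\calC')\geq 463n/496$. \Cref{lem:weighted-divisibility-transversal-T} then gives $D_X\geq 463n/496-n/992=925n/992$, and dividing by $N_i\approx 991Ln/992$ yields $925/(991L)$.

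The $Z$-distance is the main obstacle, and I would argue it by a trace-duality lift. Let $z\in\calC_0^\perp\setminus\calC_1^\perp$. Orthogonality to $\short_S(\calC')$ provides $x\in\FF_2^S$ with $(x,z)\perp\calC'$. By nondegeneracy of the trace pairing and injectivity of $\sigma$, each block functional $u_P\mapsto\sigma(u_P)\cdot z_P$ equals $\Tr(u_Pv_P)$ for a unique $v_P$. For $P\in J$, I pick any $w_P$ with $\Tr(\gamma w_P)=x_P$.

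The resulting functional $u\mapsto\sum_P\Tr(u_Pv_P)$, with $v_P=w_P$ for $P\in J$, vanishes on $\calA'$. Now $\calA'$ is cut out of $\calA_i$ by $\FF_2$-linear constraints that live only on the $J$-coordinates. Moreover, the $\FF_2$-functionals vanishing on the $\FF_{1024}$-linear code $\calA_i$ are exactly $u\mapsto\Tr(u\cdot a)$ with $a\in\calA_i^\perp$. Together these should give $v_{\overline J}=a_{\overline J}$ for some $a\in\calA_i^\perp$.

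If $a=0$, then $z$ vanishes off $J$. Since every coordinate of $z$ is off $J$, this forces $z=0\in\calC_1^\perp$, a contradiction. So $a\neq0$, and
\[
\wt(z)\geq\dist(\calA_i^\perp)-m\geq n/496-n/992=n/992,
\]
which gives $D_Z/N_i\gtrsim 1/(991L)$. I expect the careful justification of the decomposition of functionals vanishing on $\calA'$ to need the most care. Explicitness follows from \cref{lem:good-ag-fifteen-multiplication-code,lem:good-weighted-divisible-code} and the polynomial-time computations in \cref{lem:prescribed-coefficients}.
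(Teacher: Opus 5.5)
Your proposal is correct. Its construction, the $4$-multiplication argument via $\wt(\sigma(\gamma))\equiv\Tr(\gamma)=1$, the $X$-distance bound, and the choice $K=\lfloor n/992\rfloor$ coincide with the paper's proof. One cosmetic difference: the paper takes any $S_{\calA}$ with $|S_{\calA}|<\dist(\calA^\perp)$ rather than a subset of an information set. Either choice gives surjectivity onto $\FF_{1024}^{J}$.

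The genuine difference is the $Z$-distance argument.
\begin{itemize}
\item The paper argues primally, by contradiction. Suppose $\wt(z)<\dist(\calA^\perp)-K$, and let $J_{\calA}$ be the block support of $z$. Then $\calA$ projects onto all of $\FF_{1024}^{S_{\calA}\cup J_{\calA}}$. So one can pick $u'\in\calA$ vanishing on $S_{\calA}$ and agreeing on $J_{\calA}$ with a witness $u$ for $z\notin\calC_1^\perp$. This forces $0=z\cdot\sigma^{\oplus(n_{\calA}-K)}(u'_{\overline{S_{\calA}}})=z\cdot\sigma^{\oplus(n_{\calA}-K)}(u_{\overline{S_{\calA}}})=1$.
\item You argue dually. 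Let $W$ be the set of vectors whose $J$-entries lie in $\FF_2\gamma$. You use the identity $(\calA_i\cap W)^{0}=\calA_i^{0}+W^{0}$ for trace-annihilators, together with the fact that the trace-dual of an $\FF_{1024}$-linear code is its Euclidean dual. This produces an explicit $a\in\calA_i^\perp$ with $a_{\overline J}=v_{\overline J}$, whose support off $J$ lies inside the block support of $z$.
\end{itemize}
Both routes give $D_Z\geq\dist(\calA^\perp)-K$. The paper's route needs nothing beyond the surjectivity fact it already uses for $S_{\calA}$. Yours exhibits the dual codeword responsible for the bound and decouples surjectivity on $J$ from the distance argument. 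It would also carry over unchanged to other $\FF_2$-linear restrictions on the $J$-blocks. The annihilator step you flagged as delicate is standard and correct.

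One inference should be fixed. From $a=0$ you only get $v_P=0$ for $P\notin J$, that is, $z_P\perp\sigma(\FF_{1024})$. Since $\sigma(\FF_{1024})$ is only a $10$-dimensional subspace of $\FF_2^{L}$, and $L>10$ for the embeddings at hand, this does not force $z_P=0$. So ``$z$ vanishes off $J$'' is false in general.

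The contradiction survives anyway. The condition $v_{\overline J}=0$ gives $z\cdot c_{\overline S}=\sum_{P\notin J}\Tr(u_Pv_P)=0$ for every $c\in\calC'$, so $z\in\calC_1^\perp$.

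Two smaller remarks. Existence and uniqueness of $v_P$ use only nondegeneracy of the trace form, not injectivity of $\sigma$. Your weight bound correctly relies only on the implication $z_P=0\Rightarrow v_P=0$.
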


\begin{proof}
Choose a sufficiently large tower $i$ and suppress the index for simplicity.
To give the required coordinate set $S$, we first need 
an auxiliary binary linear code $\calC'$.
Take a fixed $x \in \FF_{1024}$ with $\Tr(x)=1$, and a set of coordinates $S_\calA$ of code $\calA$, whose size is $K$ and $1 \le K < \dist(\calA^\perp)$.
The auxiliary code $\calC'$ is defined as
\begin{equation}
    \calC'\coloneqq
    \left\{
    \left(h,\sigma^{\oplus(n_{\calA}-K)}
                 (u_{\scriptscriptstyle{\overline{S_{\calA}}}})\right)
    :\;
    u\in\calA,\ h\in\FF_2^K,\ u_{\scriptscriptstyle{S_{\calA}}}=xh
    \right\},
    \label{eq:auxiliary-binary-code}
\end{equation}
where $\sigma$ is the binary embedding map in \cref{lem:good-weighted-divisible-code}.
The code $\calC'$ has length $n'=K+L(n_{\calA}-K)$, dimension $k'\coloneqq\dim_{\FF_2}\calC'=10k_{\calA}-9K$, and distance $d'\coloneqq\dist(\calC')\geq\dist(\calA)$.

To see this, let $M(\calA)$ be a generator matrix of $\calA$.
For every $1\le K < \dist(\calA^\perp)$, the columns of $M(\calA)$ indexed by $S_{\calA}$ are linearly independent:
otherwise, a nontrivial dependence would give a nonzero codeword of
$\calA^\perp$ supported on $S_{\calA}$, of weight at most
$K<\dist(\calA^\perp)$.
Thus, the restriction of $\calA$ on $S_\calA$ satisfies $\left\{ u_{\scriptscriptstyle{S_\calA}}:\, u\in \calA \right\} = \FF_{1024}^{K}$.
On the other hand, since $\Tr(x)=1$, we have $x\neq0$, and
$x\FF_2^{K}$ is a binary subspace of $\FF_{1024}^{K}$, hence a subspace of $\left\{ u_{\scriptscriptstyle{S_\calA}}:\, u\in \calA \right\}$.
Among the codewords of $\calA$, the number of $u$ satisfying $u_{\scriptscriptstyle{S_\calA}}\in x\FF_2^{K}$ is $2^K {1024}^{k_\calA - K} = 2^{10k_\calA - 9K}$.
For every such $u$, there is a unique $h\in\FF_2^{K}$ such that $u_{S_{\calA}}=xh$.
We use these $u$'s and the corresponding $h$ to construct $\calC'$ as in \cref{eq:auxiliary-binary-code}.
The map $u\rightarrow \left(h,\sigma^{\oplus(n_{\calA}-K)}(u_{\scriptscriptstyle{\overline{S_{\calA}}}})\right)$ is $\FF_2$-linear and injective, proving $\calC'$ is a code space over $\FF_2$, and its claimed dimension.
The code length is $n' = K+L(n_{\calA}-K)$.
Every nonzero entry of $u_{S_{\calA}}$ is mapped to a nonzero entry of $h$, and every nonzero entry of $u_{\overline{S_{\calA}}}$ is mapped to a nonzero block.
Hence, every nonzero codeword of $\calC'$ has weight at least $\dist(\calA)$.

To verify the $4$-multiplication property, take any $c'^{(1)},\ldots,c'^{(5)}\in\calC'$.
Their corresponding vectors $h^{(i)}\in\FF_2^K$ and $u^{(i)}\in\calA$ satisfy $u_{S_{\calA}}^{(i)}=xh^{(i)}$ for $i\in[5]$.
By the parity identity in \cref{eq:sigma-trace-identities}, $\wt(\sigma(x))\equiv\Tr(x)=1\pmod2$.
Since $\sigma$ is $\FF_2$-linear, replacing each block $\sigma(xh_j^{(i)})$ by the bit $h_j^{(i)}$ preserves the parity of the fivefold overlap.
Therefore,
\begin{equation*}
    \begin{aligned}
        \left|c'^{(1)}\star\cdots\star c'^{(5)}\right|
        &\equiv\sum_{j=1}^K\prod_{a=1}^5 h_j^{(a)}
            +\sum_{p\in\overline{S_{\calA}}}\sum_{s=1}^L
                \prod_{a=1}^5\sigma_s(u_p^{(a)})\\
        &\equiv\sum_{p=1}^{n_{\calA}}\sum_{s=1}^L
                \prod_{a=1}^5\sigma_s(u_p^{(a)})\equiv0\pmod2,
    \end{aligned}
\end{equation*}
where the last congruence follows from the $4$-multiplication property
of $\calC=\sigma^{\oplus n_{\calA}}(\calA)$ in
\cref{lem:good-weighted-divisible-code}.

The restriction of $\calC'$ to its first $K$ coordinates is $\FF_2^K$, so these coordinates form the required set $S$.
By \cref{lem:prescribed-coefficients,lem:weighted-divisibility-transversal-T}, the CSS code $\CSS(\calC_0,\calC_1)$ with $\calC_0\coloneqq\short_S(\calC')$ and $\calC_1\coloneqq\punct_S(\calC')$ admits a transversal implementation of logical $T$ on any set of logical qubits.
The code has parameters $[[N=L(n_{\calA}-K),K,D=\min\{D_X,D_Z\}]]$.
By \cref{lem:weighted-divisibility-transversal-T}, its $X$-distance satisfies
$D_X\geq\dist(\calC')-K$, so, together with $\dist(\calC')\geq\dist(\calA)$ proved above, we have $D_X\geq\dist(\calA)-K$.
It remains to bound $D_Z$ and choose $K$ so that both relative distances and the rate stay bounded away from zero.

We claim that
\begin{equation}
    D_Z\geq\dist(\calA^\perp)-K.
    \label{eq:block-Z-distance}
\end{equation}
To prove this, suppose, for contradiction, that some $z\in\calC_0^\perp\setminus\calC_1^\perp$ satisfies $\wt(z)<\dist(\calA^\perp)-K$.
By the definitions of shortening and puncturing, $z$ satisfies
\begin{enumerate}
    \item for every $c\in\calC'$ with $c_S=\vb{0}$, $c_{\overline{S}}\cdot z=0$;
    \item there exists $g\in\calC'$ such that $g_{\overline{S}}\cdot z=1$.
\end{enumerate}
Write the vector in the second property as $g=(h,\sigma^{\oplus(n_{\calA}-K)}(u_{\overline{S_{\calA}}}))$, where $u\in\calA$, $h\in\FF_2^K\setminus\{0\}$, and $u_{S_{\calA}}=xh$.
Write $z=(z_P)_{P\in\overline{S_{\calA}}}$ in $L$-bit blocks and let $J_{\calA}\coloneqq\{P\in\overline{S_{\calA}}:z_P\neq0\}$ index its nonzero blocks.
Every nonzero block $z_P$ contains at least one nonzero entry, so $|J_{\calA}|\leq\wt(z)$.
Our assumption therefore gives $|S_{\calA}\cup J_{\calA}|=K+|J_{\calA}|<\dist(\calA^\perp)$.
Thus, the projection of $\calA$ onto $S_{\calA}\cup J_{\calA}$ is $\FF_{1024}^{|S_{\calA}\cup J_{\calA}|}$.
Consequently, we may choose $u'\in\calA$ with $u'_{S_{\calA}}=0$ and $u'_{J_{\calA}}=u_{J_{\calA}}$.
Its corresponding codeword in $\calC'$ has first $K$ coordinates zero.
By the first property and the agreement of $u'$ with $u$ on $J_{\calA}$,
\begin{equation*}
    0=z\cdot\sigma^{\oplus(n_{\calA}-K)}
       \left(u'_{\overline{S_{\calA}}}\right)
       =z\cdot\sigma^{\oplus(n_{\calA}-K)}
          \left(u_{\overline{S_{\calA}}}\right)=1,
\end{equation*}
where the last equality follows from the choice of $g$ in the second property.
This contradiction proves \cref{eq:block-Z-distance}.

It remains to choose $K$.
Set $K\coloneqq\lfloor n_{\calA}/992\rfloor$, so $K/n_{\calA}\to1/992$.
For sufficiently large $n_{\calA}$, this choice satisfies $1\leq K<\dist(\calA^\perp)$, as required for $S$ in \cref{lem:weighted-divisibility-transversal-T} for fully addressable transversality.
The AG-code bounds in \cref{eq:direct-outer-limits}, together with $N=L(n_{\calA}-K)$, give
\begin{equation*}
    \frac{K}{N}\longrightarrow\frac1{991L},
    \qquad
    \frac{D_{X}}{N} \geq \frac{925}{991L},
    \qquad
    \frac{D_Z}{N}\geq\frac1{991L}.
\end{equation*}
Therefore, the resulting CSS codes are asymptotically good and admit addressable transversal $T$ gates.
\end{proof}

\section{Minimum-length binary embeddings}
\label{sec:minimum-length-embeddings}

The binary embedding in Section~\ref{sec:binary-embedding} determines the physical block
length of the quantum code.
Finding the shortest embedding satisfying the fivefold identity amounts
to minimizing the Hamming weight of a solution to a linear system over $\FF_2$.
Explicit embeddings and the resulting quantum-code parameters are
given in \cref{app:explicit-parameters}.

We optimize the binary embedding length for general $q=2^m$ with $m\geq1$.
For the AG-code construction used here, $m=10$ is the smallest admissible extension degree; every even $m\geq10$ gives a $15$-multiplication family with positive asymptotic rate and primal and dual relative-distance bounds, as shown in \cref{appendix:required-ag-code-field}.
Let $\Phi_m:\FF_q^5\to\FF_2$ be defined by
\cref{eq:quintic-trace-form} using the absolute trace $\Tr_m: \FF_{2^m} \rightarrow \FF_2$.
The multilinearity and repeated-input identities proved in
Appendix~\ref{appendix:properties-of-binary-encoding} extend to $\Phi_m$,
since their proofs use only characteristic two and Frobenius invariance
of the absolute trace.
By \cref{eq:quintic-trace-form},
$\Phi_m(x,x,x,x,y)=\Tr_m(xy)$.
{Let $L_{\min}(m)$ denote the minimum length of a binary linear map
$\sigma:\FF_q\to\FF_2^L$ satisfying}
\begin{equation}
    \sum_{s=1}^L\prod_{a=1}^5\sigma_s(x_a)
       =\Phi_m(x_1,\ldots,x_5)
       \qquad(x_1,\ldots,x_5\in\FF_q).
    \label{eq:addressable-fivefold-embedding}
\end{equation}

{For a binary basis $e_1,\ldots,e_m$ of $\FF_q$, write
$x=\sum_{j=1}^m x_je_j$ and set $\ell_S(x)=\sum_{j\in S}x_j$ for
each nonempty $S\subseteq[m]$, as in
\cref{eq:binary-coordinate-functional}.}
{Using the notation of \cref{eq:pure-fifth-power-decomposition}, let
$\zeta\coloneqq(\zeta_S)_{\varnothing\ne S\subseteq[m]}\in\FF_2^{q-1}$
encode the coordinate choices, where $\zeta_S=0$ means that $\ell_S$
is omitted from $\sigma$.}
{We allow every nonempty $S\subseteq[m]$; the construction in
\cref{sec:binary-embedding} uses only sets with $|S|\leq5$.}

Let $s_m=\sum_{a=1}^{\min\{5,m\}}\binom ma$, and let
$M_m\in\FF_2^{s_m\times(q-1)}$ be the binary matrix whose rows are
indexed by nonempty subsets $I\subseteq[m]$ of size at most five and
whose columns are indexed by nonempty subsets $S\subseteq[m]$.
Its entries are $ (M_m)_{I,S}=\mathbf1_{I\subseteq S}.$
{Here $\mathbf1_{I\subseteq S}$ is $1$ if $I\subseteq S$ and $0$
otherwise.}
For each $I=\{i_1,\ldots,i_r\}$, define the $I$th entry of
$\gamma_m\in\FF_2^{s_m}$ by
\begin{equation*}
    \gamma_m(I)
       =\Phi_m(e_{i_1},\ldots,e_{i_r},
                    \underbrace{e_{i_r},\ldots,e_{i_r}}_{5-r}).
\end{equation*}
The repeated-input identities make this independent of the ordering
and the choice of repeated inputs.

\begin{lemma}
\label{lem:addressable-affine-minimum}
For $\zeta\in\FF_2^{q-1}$, let
$\sigma(x)=(\ell_S(x))_{\zeta_S=1}$. Then $\sigma$ satisfies
\cref{eq:addressable-fivefold-embedding} if and only if
$M_m\zeta=\gamma_m$.  Every map $\sigma$ satisfying the identity is injective, and
\begin{equation}
    L_{\min}(m)
       =\min\left\{\sum_{\varnothing\ne S\subseteq[m]}\zeta_S:
           \zeta\in\{0,1\}^{q-1},\ M_m\zeta=\gamma_m\right\}.
    \label{eq:addressable-affine-minimum}
\end{equation}
Moreover, $\operatorname{rank}M_m=s_m$ and
$m\leq L_{\min}(m)\leq s_m$.
\end{lemma}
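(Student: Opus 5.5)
The plan is to reduce \cref{eq:addressable-fivefold-embedding} to a finite check on basis tuples, exactly as in the proof of \cref{lem:good-weighted-divisible-code}. Both sides are $\FF_2$-multilinear in $(x_1,\ldots,x_5)$: the left side because each $\sigma_s=\ell_S$ is linear, and the right side by the multilinearity of $\Phi_m$ from Appendix~\ref{appendix:properties-of-binary-encoding}. Hence the identity holds for all inputs if and only if it holds for all $5$-tuples of basis vectors $(e_{i_1},\ldots,e_{i_5})$.

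For such a tuple, let $I=\{i_1,\ldots,i_5\}$ be its set of distinct indices, so $1\leq|I|\leq\min\{5,m\}$. The left side equals $\sum_S\zeta_S\mathbf1_{I\subseteq S}=(M_m\zeta)_I$. By the repeated-input identity, the right side depends only on $I$ and equals $\gamma_m(I)$. Every such $I$ arises from some tuple, so the identity is equivalent to $M_m\zeta=\gamma_m$. Every linear map $\sigma:\FF_q\to\FF_2^L$ has coordinates of the form $\ell_S$ for some nonempty $S$ (zero coordinates can be dropped, which only shortens $\sigma$). Repeated coordinates cancel in pairs modulo $2$ in the left side, so they can also be removed without changing the identity and only shorten $\sigma$. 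Hence minimal maps correspond to $0/1$ vectors $\zeta$, and \cref{eq:addressable-affine-minimum} follows.

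Injectivity follows as before. Specializing the identity to the inputs $(x,x,x,x,y)$ gives $\sum_s\sigma_s(x)\sigma_s(y)=\Tr_m(xy)$. If $\sigma(x)=0$, then $\Tr_m(xy)=0$ for all $y$, and nondegeneracy of the trace form forces $x=0$. Injectivity of $\sigma$ gives $L\geq m$, hence $L_{\min}(m)\geq m$.

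For the rank statement, restrict $M_m$ to the $s_m$ columns indexed by $S$ with $|S|\leq5$. Order rows and columns by decreasing size. The entry $\mathbf1_{I\subseteq S}$ is nonzero with $|S|\le|I|$ only when $S=I$, so this square submatrix is unitriangular, hence invertible. It follows that $\operatorname{rank}M_m=s_m$, and the system $M_m\zeta=\gamma_m$ is consistent. Moreover, it has a solution supported on these $s_m$ columns, which is exactly the recursive construction of \cref{eq:quintic-subset-recursion}. That solution has weight at most $s_m$, giving $L_{\min}(m)\leq s_m$.

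The only delicate point is the claim that $\gamma_m(I)$ is well defined and equals $\Phi_m$ on every tuple with distinct-index set $I$. This needs the ``depends only on the set of distinct inputs'' property for $\Phi_m$ with general $m$, which the paper asserts carries over from $m=10$. I would cite that extension rather than reprove it.
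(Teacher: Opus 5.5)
Your proposal is correct and follows essentially the same route as the paper's proof. You reduce the identity to basis tuples by multilinearity to obtain $M_m\zeta=\gamma_m$, and you delete zero and pairwise-repeated coordinate functionals to pass to $0/1$ vectors $\zeta$. You get injectivity, and hence $L\geq m$, from the specialization $(x,x,x,x,y)$ and nondegeneracy of the trace pairing. You then use the unitriangular $s_m\times s_m$ submatrix on the columns with $|S|\leq 5$ for both the rank and the bound $L_{\min}(m)\leq s_m$. Like the paper, you rely on the asserted extension of the multilinearity and repeated-input identities of $\Phi_m$ to general $m$.
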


\begin{proof}
Repeating the inputs in \cref{eq:addressable-fivefold-embedding}
gives $\sigma(x)\cdot\sigma(y)=\Tr_m(xy)$.
The trace pairing is nondegenerate, so $\sigma$ is injective.
Each coordinate is a binary linear functional.
{Deleting zero coordinate functionals and pairs of identical
coordinate functionals preserves \cref{eq:addressable-fivefold-embedding}.}
{A minimum-length map therefore has distinct nonzero coordinate
functionals $\ell_S$ from \cref{eq:binary-coordinate-functional}.}

For basis inputs $e_{i_1},\ldots,e_{i_5}$, let
$I=\{i_1,\ldots,i_5\}$. Since $\ell_S(e_i)=\mathbf1_{i\in S}$,
the product $\prod_{a=1}^5\ell_S(e_{i_a})$ is $1$ exactly when
$I\subseteq S$, and $0$ otherwise. The fivefold identity on these
inputs is therefore $\sum_{\{S\mid I\subseteq S\}}\zeta_S=\gamma_m(I)$
in $\FF_2$.
Both sides are binary multilinear, so these equations are also
sufficient for the identity on all inputs.  The number of selected
functionals is $\sum_S\zeta_S$, proving the optimization formula.

Restrict $M_m$ to columns with $1\leq|S|\leq5$, ordered with the
rows by decreasing cardinality.  The resulting square matrix is
triangular with diagonal entries one.  Hence $M_m$ has rank $s_m$,
and a solution supported on these $s_m$ columns exists for every
target $\gamma_m$.  This proves the upper bound on $L_{\min}(m)$;
injectivity proves the lower bound.
\end{proof}

{The recursion in \cref{eq:quintic-subset-recursion}, with
$\zeta_S=0$ for $|S|>5$, gives a particular solution
$\zeta^{(0)}$ of $M_m\zeta=\gamma_m$.}
{The solution set is $\zeta^{(0)}+\ker M_m$, with $2^{q-1-s_m}$
elements by \cref{lem:addressable-affine-minimum}.  Gaussian
elimination computes a basis of $\ker M_m$.}
{By \cref{eq:addressable-affine-minimum}, enumerating this affine
space and taking the least Hamming weight computes $L_{\min}(m)$.}

Suppose a family of AG codes over a fixed $\FF_q$ has
the $15$-multiplication property, positive rate, and positive relative
distances for both the codes and their duals. Any solution of
$M_m\zeta=\gamma_m$ defines a binary embedding satisfying
\cref{eq:addressable-fivefold-embedding} for $\Phi_m$ defined by
\cref{eq:quintic-trace-form}. The argument of
\cref{lem:good-weighted-divisible-code} then gives binary codes with
the $4$-multiplication property. The restriction and
weighted-divisibility steps of \cref{thm:direct-good-transversal-T}
produce an asymptotically good CSS-code family with fully addressable
transversal $T$ gates.

For the resulting CSS family, the physical length is
$N=L(n_{\calA}-K)$. With the AG code family and $K$ fixed, the stated
lower bounds on $D_X$ and $D_Z$ do not depend on $L$. Thus a shorter
embedding improves
the constants in the asymptotic rate and relative-distance bounds in
\cref{eq:direct-quantum-parameters}, each of which scales as $1/L$.


\section[Discussion and future directions]{Discussion and future directions}
\label{sec:discussion}

{We construct binary CSS codes with constant rate and linear
distance that admit fully addressable transversal logical $T$ gates on
a fixed encoding, without any subsequent Clifford correction.
AG-code multiplication and a binary embedding yield fivefold overlap
identities, while parity lifting supplies divisibility coefficients that
prescribe individual logical phases. Restriction and compression of
selected coordinates preserve these identities and yield linear lower
bounds on the quantum distances. We also formulate the minimum embedding
length as a binary optimization problem and obtain shorter embeddings
that improve the explicit parameter bounds.}

The construction raises questions about the coding and
computational cost of full addressability.
\begin{enumerate}
\item {What general structural principles govern fully
addressable transversal non-Clifford gates in asymptotically good quantum
codes? In particular, can the prescribed-coefficient mechanism be
connected with cohomological and tensor-network
constructions~\cite{golowich2026improved,cao2026quantum} to obtain criteria
that ensure both independent logical control and asymptotic goodness?
A common framework could reveal new code families and clarify which
features of the AG construction are essential.}

\item Do asymptotically good binary quantum LDPC codes admit fully
addressable transversal logical $T$ using physical powers of $T$, with a
fixed encoding and no subsequent Clifford correction? Here, the
stabilizer generators must have uniformly bounded weight, and each qubit
must occur in a uniformly bounded number of generators. Recent
constructions provide addressable controlled-phase gates with other
parameter tradeoffs~\cite{golowich2026improved}, and nontrivial transversal
multi-controlled-$Z$ gates on good quantum locally testable
codes~\cite{li2026goodtransversal}. These results do not establish the
fully addressable logical $T$ property required here.

\item What is the optimal space-time tradeoff for universal
fault-tolerant computation using codes with fully addressable transversal
$T$? Constant space overhead and logarithmic time overhead are already
achievable under suitable circuit-width and classical-processing
assumptions~\cite{han2026purely}. The question is
whether direct, parallel logical $T$ layers improve this tradeoff once
noisy syndrome extraction, ancillary-state preparation, complementary
logical gates, and classical processing are included under a common noise
and connectivity model.
\end{enumerate}

\paragraph{Statement of AI use.}
We used GPT 5.6 Sol and GPT 6 Astra to assist with brainstorming ideas, exploring proof strategies, writing proofs and preparing manuscript. 
The authors take full responsibility for every claim, proof, and citation in it.

\section*{Acknowledgements}

T.L. and B.W. acknowledge support
from the National Natural Science Foundation of China Grant No. 12405014.
B.C. acknowledges the support by the CQT Young Researcher Career Development Grant (26-YRCDG-BC).

\bibliographystyle{alpha}
\bibliography{refs}

\appendix
\crefalias{section}{appendix}
\crefname{appendix}{Appendix}{Appendices}
\Crefname{appendix}{Appendix}{Appendices}
\section{Algebraic geometry codes and their multiplication property}
\label{app:ag-codes}

Here, we recall the definitions and standard results on AG codes used in
\cref{subsec:ag-fifteen-multiplication}, and refer to
\cite{stichtenoth2009algebraic} for details.  We give proofs of the resulting code parameter bounds and the divisor
criterion for the multiplication property.

\subsection{From polynomial evaluation to rational places}

For a field $K$, the polynomial ring $K[x]$ consists of formal
expressions $f(x)=\sum_{j=0}^r c_jx^j$ with $c_j\in K$; equality
means equality of the coefficients.  Here, $x$ is a formal variable.
For nonzero $f$, its degree is the largest index $j$ with $c_j\neq0$,
and the coefficient at that index is its leading coefficient.
Evaluation at $\alpha\in K$ substitutes $\alpha$ for $x$, giving
$f(\alpha)=\sum_{j=0}^r c_j\alpha^j$.
For distinct $a_1,\ldots,a_n\in\FF_q$, evaluating polynomials of
degree at most $r$ gives the Reed-Solomon code
\begin{equation*}
    \{(f(a_1),\ldots,f(a_n)):f\in\FF_q[x],\ \deg f\leq r\},
    \qquad 0\leq r<n,
\end{equation*}
where the zero polynomial is included.  The message space has basis
$1,x,\ldots,x^r$.  Formal polynomials retain their coefficients even
when their evaluations agree: for example, $x^q-x$ is a nonzero
polynomial that vanishes at every element of $\FF_q$.

The \emph{rational function field} $K(x)$ consists of fractions
$a/b$ with $a,b\in K[x]$ and $b\neq0$, where
$a/b=c/d$ means $ad=bc$, with the usual addition and multiplication
of fractions.  A fraction is in \emph{lowest terms}
when numerator and denominator have no common factor of positive
degree.  For such a representation, evaluation at $x=\alpha$ is
$a(\alpha)/b(\alpha)$ when $b(\alpha)\neq0$; otherwise the function
has a \emph{pole} there.  Canceling common factors before evaluation
makes this rule independent of the representation.

To pass to more general function fields, recall that a \emph{field
extension} $E/K$ means that $E$ is a field containing $K$.
Multiplication by elements of $K$ makes $E$ a vector space over $K$.
Its \emph{extension degree} is $[E:K]=\dim_K E$, and the extension
is \emph{finite} when this dimension is finite.  An element $u\in E$
is \emph{algebraic over $K$} if $p(u)=0$ for some nonzero polynomial
$p(U)\in K[U]$.  It is \emph{transcendental over $K$} if no such
polynomial exists.  In particular, the formal variable $x\in K(x)$
is transcendental over $K$: substituting $x$ into a nonzero polynomial
$p(U)$ gives the nonzero formal polynomial $p(x)$.

A \emph{function field} $F/\FF_q$ is a field containing
$\FF_q(x)$ with $[F:\FF_q(x)]<\infty$.
Thus, its elements form a finite-dimensional vector space over the
field of rational functions.  The simplest example is
$F=\FF_q(x)$ itself.  Throughout, we assume that $\FF_q$ is the
\emph{full constant field}: any element of $F$ satisfying a nonzero
polynomial equation with coefficients in $\FF_q$ already belongs to
$\FF_q$.  Places provide evaluation points for these fields, and
valuations record the orders of zeros and poles at those points.

\begin{definition}[Places and evaluation]
\label{def:app-ag-evaluation}
A \emph{place} $P$ of $F/\FF_q$ is specified by a \emph{normalized
discrete valuation} $v_P:F\to\ZZ\cup\{\infty\}$.  Writing
$F^\times=F\setminus\{0\}$, this means that $v_P(0)=\infty$,
$v_P(F^\times)=\ZZ$, $v_P(c)=0$ for $c\in\FF_q^\times$, and
\begin{equation*}
    v_P(fh)=v_P(f)+v_P(h),\qquad
    v_P(f+h)\geq\min\{v_P(f),v_P(h)\}.
\end{equation*}
For $f\neq0$, positive and negative values of $v_P(f)$ describe a
zero and a pole, respectively, with order $|v_P(f)|$.
A function is \emph{regular at $P$} if $v_P(f)\geq0$.
The regular functions form the \emph{valuation ring}
$\calO_P$, and we also define the subset $\frakm_P$ by
\begin{equation*}
    \calO_P=\{f\in F:v_P(f)\geq0\},\qquad
    \frakm_P=\{f\in F:v_P(f)>0\}.
\end{equation*}
We identify two regular functions $f,h$ when their difference lies
in $\frakm_P$, that is, when $v_P(f-h)>0$.
Their equivalence classes form the \emph{residue field}
$F_P=\calO_P/\frakm_P$; the notation
$f+\frakm_P$ denotes the class of $f$.
This is a finite extension of $\FF_q$.  The \emph{degree} of $P$
is $\deg P=[F_P:\FF_q]$, and $P$ is \emph{rational} if
$\deg P=1$.  Evaluation of a regular function is
$f(P)=f+\frakm_P\in F_P$.
\end{definition}

Addition and multiplication of classes are induced by the operations
in $\calO_P$; the valuation inequalities make these operations
independent of the chosen representatives.  A nonzero class has a
representative $f$ with $v_P(f)=0$, so $v_P(f^{-1})=0$ and the class
of $f^{-1}$ is its inverse.  This explains why $F_P$ is a field.
Its finite extension degree is a standard function-field result
\cite[Proposition~1.1.15]{stichtenoth2009algebraic}.
Constants embed in $F_P$, since distinct constants have a difference
of valuation zero.  At a rational place this identifies $F_P$ with
$\FF_q$.  For regular $f,h$, evaluation therefore satisfies
\begin{equation*}
    (f+h)(P)=f(P)+h(P),\qquad (fh)(P)=f(P)h(P),
\end{equation*}
and $f(P)=0$ exactly when $v_P(f)>0$.

For $F=\FF_q(x)$, the places have a concrete description
\cite[Section~1.2]{stichtenoth2009algebraic}.
Let $h\in\FF_q[x]$ be \emph{monic}, meaning that its leading
coefficient is one, and \emph{irreducible}, meaning that it has
positive degree and is not a product of two polynomials of positive
degree.  It defines a place $P_h$ with
\begin{equation*}
    v_{P_h}(a/b)=\ord_h(a)-\ord_h(b),
    \qquad \deg P_h=\deg h,
\end{equation*}
for nonzero $a,b$, where $\ord_h$ is the multiplicity
of $h$ as a factor.  Its residue field $\FF_q[x]/(h)$ consists of
polynomial remainders of degree less than $\deg h$, with addition and
multiplication followed by reduction modulo $h$.
In particular, for $h=x-\alpha$ with $\alpha\in\FF_q$, the place
is rational and evaluation agrees with substitution at $\alpha$.
There is one further place, the rational place $P_\infty$, with
\begin{equation*}
    v_{P_\infty}(a/b)=\deg b-\deg a.
\end{equation*}
Thus, a nonconstant polynomial has a pole at infinity whose order is
its degree.  This expresses the degree bound in polynomial evaluation
as a bound on poles.

\subsection{Divisors and Riemann-Roch spaces}

A \emph{divisor} is a formal sum $G=\sum_Pc_PP$ over the places
of $F$, where $c_P\in\ZZ$ and only finitely many coefficients are
nonzero.  Its support is $\supp G=\{P:c_P\neq0\}$,
and its degree is $\deg G=\sum_Pc_P\deg P$.
We call $G$ \emph{effective}, and write $G\geq0$, if every
coefficient is nonnegative; $G\leq H$ means $H-G\geq0$.
Addition and these inequalities are understood coefficientwise.
For $f\in F^\times$, its \emph{principal divisor} is
$(f)=\sum_Pv_P(f)P$.  It has finite support and degree zero
\cite[Sections~1.3--1.4]{stichtenoth2009algebraic}, so the total
degrees of its zeros and poles agree, counting multiplicities.

The \emph{Riemann-Roch space} associated with $G$ is
\begin{equation}
    \calL(G)=\{0\}\cup\{f\in F^\times:(f)+G\geq0\},
    \qquad \ell(G)=\dim_{\FF_q}\calL(G).
    \label{eq:app-ag-rr-space}
\end{equation}
The valuation inequalities make $\calL(G)$ an $\FF_q$-vector space;
its finite dimensionality is a standard result
\cite[Proposition~1.4.9]{stichtenoth2009algebraic}.
A positive coefficient $c_P$ permits a pole of order at most $c_P$,
while a negative coefficient requires a zero of order at least $-c_P$.
For example, if $P\neq Q$, then $\calL(3P-2Q)$ permits a pole of
order at most three at $P$, requires a zero of order at least two at
$Q$, and permits no pole at any other place.
The valuation rules immediately give
\begin{equation}
    \begin{aligned}
        G\leq H&\quad\Longrightarrow\quad
            \calL(G)\subseteq\calL(H),\\
        f\in\calL(G),\ h\in\calL(H)&\quad\Longrightarrow\quad
            fh\in\calL(G+H).
    \end{aligned}
    \label{eq:app-ag-function-products}
\end{equation}
Also, $\deg G<0$ implies $\calL(G)=\{0\}$: a nonzero member would
make $(f)+G$ effective with negative degree.

Riemann's theorem states that the integers $\deg G+1-\ell(G)$,
as $G$ ranges over all divisors, have a finite upper bound
\cite[Section~1.4]{stichtenoth2009algebraic}.
Their maximum defines the \emph{genus}:
\begin{equation}
    g=\max_G\{\deg G+1-\ell(G)\}.
\end{equation}
Thus, $g$ bounds the loss of dimension relative to $\deg G+1$.
A function without poles is constant
\cite[Corollary~1.1.20]{stichtenoth2009algebraic}, so
$\calL(0)=\FF_q$ and $\ell(0)=1$.  Taking $G=0$ shows that
$g\geq0$.
We use the equivalent characterization that a divisor $R$ is
\emph{canonical} if and only if $\deg R=2g-2$ and $\ell(R)=g$
\cite[Corollary~1.5.16 and Proposition~1.6.2]{stichtenoth2009algebraic}.
The following standard theorem provides both its existence and the
dimension formula we use.

\begin{theorem}[Riemann-Roch]
\label{thm:app-ag-riemann-roch}
Canonical divisors exist.  For every canonical divisor $R$ and every
divisor $G$,
\begin{equation}
    \ell(G)=\deg G+1-g+\ell(R-G).
    \label{eq:app-ag-riemann-roch}
\end{equation}
Consequently, $\ell(G)\geq\deg G+1-g$, with equality whenever
$\deg G>2g-2$.
\end{theorem}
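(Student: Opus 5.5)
The plan is to derive everything from Riemann's theorem (the existence of the genus bound) together with the duality between $\calL(G)$ and the space of Weil differentials $\Omega(G)$, following \cite[Sections~1.5--1.6]{stichtenoth2009algebraic}. The proof has four steps.

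\textbf{Step 1: adeles and the index of specialty.} Introduce the adele space $\mathcal A_F$ and, for each divisor $G$, its subspace $\mathcal A_F(G)$. Prove that
\begin{equation*}
    \deg G+1-\ell(G)+\dim\bigl(\mathcal A_F/(\mathcal A_F(G)+F)\bigr)
\end{equation*}
is independent of $G$. The method is to compare $G\leq H$ and compute $\dim \mathcal A_F(H)/\mathcal A_F(G)=\deg H-\deg G$ place by place. Taking a divisor with $\ell(G)$ large enough that Riemann's maximum $g$ is attained shows that this constant equals $g$. We therefore define $i(G)\coloneqq\dim \mathcal A_F/(\mathcal A_F(G)+F)$, and obtain
\begin{equation*}
    \ell(G)=\deg G+1-g+i(G).
\end{equation*}

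\textbf{Step 2: Weil differentials.} Define Weil differentials as the linear functionals on $\mathcal A_F$ that vanish on $\mathcal A_F(G)+F$ for some $G$. The set $\Omega_F$ of all of them is a one-dimensional $F$-vector space. Dimension at least one follows from $i(G)>0$ for $\deg G$ very negative. Dimension at most one follows from a counting argument: if $\omega_1,\omega_2$ were $F$-independent, then $\calL(A+B_1)\omega_1+\calL(A+B_2)\omega_2\subseteq\Omega(-A)$, and comparing dimensions via Step 1 yields a contradiction for $\deg A$ large. I expect this to be the main obstacle.

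\textbf{Step 3: canonical divisors.} Each nonzero $\omega$ has a maximal divisor $(\omega)$ with $\omega\in\Omega((\omega))$. Canonical divisors are these $(\omega)$, and the map $f\mapsto f\omega$ gives an isomorphism $\calL(R-G)\cong\Omega(G)=(\mathcal A_F/(\mathcal A_F(G)+F))^\ast$. Hence $i(G)=\ell(R-G)$, which is exactly \cref{eq:app-ag-riemann-roch}.

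\textbf{Step 4: consequences.} Since $\ell(R-G)\geq0$, we get $\ell(G)\geq\deg G+1-g$. Taking $G=0$ gives $\ell(R)=g$, and taking $G=R$ gives $\deg R=2g-2$. Finally, if $\deg G>2g-2$, then $\deg(R-G)<0$, so $\ell(R-G)=0$ by the remark preceding the theorem, and equality holds.

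Since the statement is standard, the paper may simply cite \cite[Theorem~1.5.15]{stichtenoth2009algebraic} in place of Steps 1--3. Step 4 is the part used elsewhere in the paper.
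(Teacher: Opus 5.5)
Your proposal is correct and takes essentially the same route as the paper, which simply cites Stichtenoth's Theorems~1.5.15 and~1.5.17 (proved by the adele/Weil-differential argument you outline) and gets the equality case from $\deg(R-G)<0$, as in your Step~4. One point to tighten in Step~1: identifying the constant with $g$ also requires showing $\mathcal A_F=\mathcal A_F(B)+F$ (i.e.\ $i(B)=0$) for some divisor $B$ of large degree, and this is also what makes $i(G)$ finite so that the ``independent of $G$'' comparison is meaningful.
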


See \cite[Theorems~1.5.15 and 1.5.17]{stichtenoth2009algebraic} for a proof.
The equality in the last assertion follows because
$\deg(R-G)<0$ in the stated range.
For the rational function field $\FF_q(x)$,
\begin{equation*}
    \calL(rP_\infty)=\Span_{\FF_q}\{1,x,\ldots,x^r\}
    \qquad(r\geq0).
\end{equation*}
Indeed, a rational function without finite poles is a polynomial, and
its pole order at infinity is its degree.  Comparing the dimension
$r+1$ with Riemann-Roch for sufficiently large $r$ gives $g=0$.
Thus, $G$ generalizes the degree bound on the Reed-Solomon message
space, and $g$ measures the possible dimension loss in this
generalization.

\subsection{Evaluation codes and their parameters}

Choose distinct rational places $P_1,\ldots,P_n$, with $n\geq1$,
and set $D=P_1+\cdots+P_n$.  Let $G$ be a divisor whose support is
disjoint from that of $D$.  Every function in $\calL(G)$ is then
regular at each $P_i$, and rationality ensures that all values lie in
$\FF_q$.  Since every $P_i$ has degree one, $\deg D=n$.
The \emph{AG evaluation code} is the image
of the linear map
\begin{equation}
    \begin{aligned}
        \evmap_D:\calL(G)&\longrightarrow\FF_q^n,
        &f&\longmapsto(f(P_1),\ldots,f(P_n)),\\
        C_L(D,G)&\coloneqq\evmap_D(\calL(G)).
    \end{aligned}
    \label{eq:app-ag-evaluation-code}
\end{equation}

\begin{proposition}[Dimension and distance]
\label{prop:app-ag-parameters}
For $D,G$ as above,
\begin{equation}
    \ker(\evmap_D)=\calL(G-D),\qquad
    \dim_{\FF_q}C_L(D,G)=\ell(G)-\ell(G-D).
    \label{eq:app-ag-evaluation-kernel}
\end{equation}
If $\deg G<n$, evaluation is injective and
$\dim_{\FF_q}C_L(D,G)=\ell(G)\geq\deg G+1-g$.
In particular, if $2g-2<\deg G<n$, then
\begin{equation}
    \dim_{\FF_q}C_L(D,G)=\deg G+1-g.
    \label{eq:app-ag-code-dimension}
\end{equation}
Whenever $C_L(D,G)\neq\{0\}$, its minimum distance satisfies
\begin{equation}
    \dist(C_L(D,G))\geq n-\deg G.
    \label{eq:app-ag-code-distance}
\end{equation}
\end{proposition}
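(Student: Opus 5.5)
First I would identify the kernel of $\evmap_D$. A function $f\in\calL(G)$ lies in $\ker(\evmap_D)$ exactly when $f(P_i)=0$ for every $i$. By the remark after \cref{def:app-ag-evaluation}, this means $v_{P_i}(f)\geq1$ for each $i$. Since $\supp G$ is disjoint from $\supp D$, the coefficient of $G$ at each $P_i$ is zero. So the condition $(f)+G\geq0$ together with $v_{P_i}(f)\geq1$ is equivalent to $(f)+G-D\geq0$. At places outside $\supp D$ the conditions are identical, and at $P_i$ the requirement $v_{P_i}(f)\geq0$ is strengthened to $v_{P_i}(f)\geq1$. This gives $\ker(\evmap_D)=\calL(G-D)$. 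Rank–nullity on the finite-dimensional space $\calL(G)$ then yields $\dim C_L(D,G)=\ell(G)-\ell(G-D)$.

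Second, if $\deg G<n$, then $\deg(G-D)=\deg G-n<0$, so $\calL(G-D)=\{0\}$ by the negative-degree remark. Hence evaluation is injective and $\dim C_L(D,G)=\ell(G)$. The bound $\ell(G)\geq\deg G+1-g$ comes directly from \cref{thm:app-ag-riemann-roch}, or even from the definition of the genus. If in addition $\deg G>2g-2$, the equality clause of Riemann–Roch gives \cref{eq:app-ag-code-dimension}.

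Third, for the distance, take a nonzero codeword $\evmap_D(f)$ with $f\in\calL(G)$, and let $Z$ be the set of indices $i$ with $f(P_i)=0$. Set $D_Z=\sum_{i\in Z}P_i$. Repeating the argument of the first step with $D_Z$ in place of $D$ shows $f\in\calL(G-D_Z)$, and $f\neq0$ because its codeword is nonzero. A nonzero element can exist only if $\deg(G-D_Z)\geq0$, so $|Z|\leq\deg G$. Therefore the weight of the codeword is $n-|Z|\geq n-\deg G$, which proves \cref{eq:app-ag-code-distance}.

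The only delicate point is the valuation bookkeeping at the places $P_i$, which is where disjointness of supports is used. I expect no real obstacle beyond stating it carefully.
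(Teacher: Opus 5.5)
Your proposal is correct and follows essentially the same route as the paper: you identify $\ker(\evmap_D)=\calL(G-D)$ by strengthening $v_{P_i}(f)\geq 0$ to $v_{P_i}(f)\geq 1$, and you obtain injectivity and the dimension claims from the negative-degree vanishing and Riemann--Roch. For the distance, you bound the number of vanishing evaluation places by $\deg G$ using the nonzero function in $\calL(G-\sum_{i\in Z}P_i)$. Your explicit appeal to disjointness of supports and to rank--nullity only spells out steps that the paper's proof leaves implicit.
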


\begin{proof}
For $f\in\calL(G)$, vanishing at every $P_i$ is equivalent to
$v_{P_i}(f)\geq1$ for every $i$.  These are precisely the additional
conditions imposed by $G-D$, giving the kernel formula and hence
the dimension formula.  If $\deg G<n$, then $\deg(G-D)<0$, so
the kernel is zero.  The dimension bounds now follow from
\cref{thm:app-ag-riemann-roch}.

For a nonzero codeword represented by $f$, let $D_0$ be the sum of
the evaluation places where it vanishes.  If there are $z$ such
places, then $(f)+G-D_0\geq0$ and $\deg D_0=z$.  Taking degrees
gives $\deg G-z\geq0$, so the codeword has weight
$n-z\geq n-\deg G$.
\end{proof}

\subsection{Duality and the multiplication property}

The Euclidean dual of an AG evaluation code is again an evaluation
code for a suitable canonical divisor.  We use the following
standard form of AG-code duality without proof
\cite[Theorem~2.2.8, Lemma~2.2.9, and
Proposition~2.2.10]{stichtenoth2009algebraic}.

\begin{theorem}[AG-code duality]
\label{thm:app-ag-duality}
For $D=P_1+\cdots+P_n$ as above, there exists a canonical divisor
$R$, whose coefficient at every $P_i$ is $-1$, such that for every
divisor $G$ with support disjoint from $D$,
\begin{equation}
    C_L(D,G)^\perp=C_L(D,R+D-G).
    \label{eq:app-ag-duality}
\end{equation}
Here, the dual is with respect to the ordinary inner product
$\langle u,v\rangle=\sum_{i=1}^n u_iv_i$ over $\FF_q$.
\end{theorem}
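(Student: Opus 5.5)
The plan is to realize the dual code through residues of Weil differentials, which is the standard route in \cite[Section~2.2]{stichtenoth2009algebraic}. We introduce the \emph{differential code}
\begin{equation*}
C_\Omega(D,G)\coloneqq\bigl\{(\operatorname{res}_{P_1}(\omega),\ldots,\operatorname{res}_{P_n}(\omega)):\omega\in\Omega(G-D)\bigr\},
\end{equation*}
where $\Omega(A)$ denotes the differentials $\omega$ with $\omega=0$ or $(\omega)\geq A$. The proof then has three steps: show $C_\Omega(D,G)=C_L(D,G)^\perp$, construct one differential $\eta$ with controlled behaviour at the $P_i$, and use $\eta$ to turn $C_\Omega$ back into an evaluation code.

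\emph{Orthogonality.} Take $f\in\calL(G)$ and $\omega\in\Omega(G-D)$. Then $(f\omega)\geq -D$, so $f\omega$ is regular away from the $P_i$ and has at most simple poles at them. At a simple pole we have $\operatorname{res}_{P_i}(f\omega)=f(P_i)\operatorname{res}_{P_i}(\omega)$. The residue theorem $\sum_P\operatorname{res}_P(f\omega)=0$ therefore gives $\sum_i f(P_i)\operatorname{res}_{P_i}(\omega)=0$, so $C_\Omega(D,G)\subseteq C_L(D,G)^\perp$.

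\emph{Dimension count.} The kernel of the residue map on $\Omega(G-D)$ is $\Omega(G)$. Using $\dim\Omega(A)=\ell(R-A)$ for any canonical $R$, this gives $\dim C_\Omega(D,G)=\ell(R+D-G)-\ell(R-G)$. By \cref{prop:app-ag-parameters}, $\dim C_L(D,G)=\ell(G)-\ell(G-D)$. Applying \cref{thm:app-ag-riemann-roch} to $G$ and to $G-D$ and subtracting, the two dimensions sum to $\deg D=n$. Hence the inclusion is an equality, $C_L(D,G)^\perp=C_\Omega(D,G)$.

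\emph{Choice of $R$ and conversion back to an evaluation code.} By the weak approximation theorem, choose $t\in F$ with $v_{P_i}(t)=1$ for all $i$. Then choose a differential $\eta$ with $v_{P_i}(\eta)=-1$ and $\operatorname{res}_{P_i}(\eta)=1$ for every $i$; a suitable multiple of $dt/t$ works, with the multiplier adjusted via approximation. Set $R\coloneqq(\eta)$. This is canonical, and its coefficient at each $P_i$ is $-1$, as the theorem requires. Every differential has the form $f\eta$, and $f\eta\in\Omega(G-D)$ exactly when $f\in\calL(R+D-G)$. Moreover $v_{P_i}(f)\geq0$, because the coefficient of $R+D-G$ at $P_i$ is $-1+1-0=0$. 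Hence $\operatorname{res}_{P_i}(f\eta)=f(P_i)\operatorname{res}_{P_i}(\eta)=f(P_i)$, which shows $C_\Omega(D,G)=C_L(D,R+D-G)$. The same $R$ serves for every $G$ with support disjoint from $D$, since $\eta$ depends only on $D$.

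The main obstacle is the analytic input rather than the linear algebra: the residue theorem for Weil differentials, and the existence of $\eta$ with prescribed simple poles and unit residues at the $P_i$. I would take both from \cite[Theorem~4.3.2, Lemma~2.2.9]{stichtenoth2009algebraic} rather than reprove them. The Riemann--Roch dimension bookkeeping is routine once those facts are in place.
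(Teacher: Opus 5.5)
Your proposal is correct and follows the same route as the results the paper cites in place of a proof (Stichtenoth's Theorem~2.2.8, Lemma~2.2.9, and Proposition~2.2.10): you identify $C_L(D,G)^\perp$ with the differential code $C_\Omega(D,G)$ via the residue theorem and a Riemann--Roch dimension count, then rewrite $C_\Omega(D,G)$ as $C_L(D,(\eta)+D-G)$ using a differential $\eta$ with simple poles and unit residues at the $P_i$. One small simplification: $v_{P_i}(t)=1$ makes $t$ a prime element, hence separating, at each $P_i$, so $dt/t$ itself already has $v_{P_i}=-1$ and residue $1$, and no multiplier adjustment is needed.
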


In what follows, $R$ denotes a canonical divisor satisfying this
duality identity.  The coefficient condition ensures that $R+D-G$
has support disjoint from $D$, so its evaluation code is defined.
Whenever $C_L(D,G)^\perp\neq\{0\}$,
\cref{prop:app-ag-parameters} gives
\begin{equation}
    \dist(C_L(D,G)^\perp)
       \geq n-\deg(R+D-G)=\deg G+2-2g.
    \label{eq:app-ag-dual-distance}
\end{equation}
This is positive when $\deg G>2g-2$.

Recall that $\calA^{\star t}$ denotes the linear span of all
coordinatewise products of $t$ codewords from $\calA$, as in
\cref{def:star-product}.  Multiplicativity of evaluation relates
these products to multiplication of functions in $\calL(G)$.

\begin{proposition}[A divisor criterion for multiplication]
\label{prop:app-ag-multiplication}
Let $D,G,R$ be as above, and let $t\geq1$ be an integer.
If $(t+1)G\leq R+D$, then $\calA=C_L(D,G)$ has the
$t$-multiplication property:
\begin{equation}
    \calA^{\star t}\subseteq\calA^\perp.
    \label{eq:app-ag-multiplication}
\end{equation}
Equivalently, for every $u^{(1)},\ldots,u^{(t+1)}\in\calA$,
\begin{equation}
    \sum_{i=1}^n\prod_{a=1}^{t+1}u_i^{(a)}=0.
    \label{eq:app-ag-product-identity}
\end{equation}
\end{proposition}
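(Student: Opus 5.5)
The plan is to reduce the $t$-multiplication property to a statement about functions, using the product rule \cref{eq:app-ag-function-products} together with the duality theorem \cref{thm:app-ag-duality}. The first step is to identify a spanning set of $\calA^{\star t}$. Each codeword $u^{(a)}\in\calA$ is $\evmap_D(f_a)$ for some $f_a\in\calL(G)$. Evaluation at a rational place is multiplicative, so
\begin{equation*}
u^{(1)}\star\cdots\star u^{(t)}=\evmap_D(f_1\cdots f_t).
\end{equation*}
Applying \cref{eq:app-ag-function-products} $t-1$ times gives $f_1\cdots f_t\in\calL(tG)$. The divisor $tG$ has support disjoint from $D$, so $C_L(D,tG)$ is defined. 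Every generator of $\calA^{\star t}$ therefore lies in the linear space $C_L(D,tG)$, and taking spans gives $\calA^{\star t}\subseteq C_L(D,tG)$.

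The second step is to compare divisors. The hypothesis $(t+1)G\le R+D$ is equivalent to $tG\le R+D-G$. By the monotonicity part of \cref{eq:app-ag-function-products}, this yields $\calL(tG)\subseteq\calL(R+D-G)$. Both divisors are disjoint from $\supp D$; for $R+D-G$ this uses the fact that $R$ has coefficient $-1$ at each $P_i$. Evaluation is the same map on both spaces, so $C_L(D,tG)\subseteq C_L(D,R+D-G)$.

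The third step invokes \cref{thm:app-ag-duality}: $C_L(D,R+D-G)=\calA^\perp$. Chaining the inclusions gives $\calA^{\star t}\subseteq\calA^\perp$.

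For the equivalent formulation \cref{eq:app-ag-product-identity}, I would note that $\calA^{\star t}\subseteq\calA^\perp$ holds if and only if every generator $u^{(1)}\star\cdots\star u^{(t)}$ is orthogonal to every $u^{(t+1)}\in\calA$. This orthogonality is exactly $\sum_i\prod_{a=1}^{t+1}u^{(a)}_i=0$.

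I expect no step to be a real obstacle, since the duality theorem is assumed. The only points needing care are the support-disjointness checks that make each evaluation code well defined, and remembering that $\calA^{\star t}$ is a span, so linearity of $C_L(D,tG)$ is what lets us pass from products to all of $\calA^{\star t}$.
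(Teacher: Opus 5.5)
Your proposal is correct and follows essentially the same route as the paper's proof: you place products of $t$ codewords in $C_L(D,tG)$, use $tG\le R+D-G$ to include this in $C_L(D,R+D-G)$, identify the latter with $\calA^\perp$ by AG-code duality, and get the equivalent product identity from bilinearity. Your explicit support-disjointness checks are details the paper leaves implicit.
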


\begin{proof}
Products of $t$ functions in $\calL(G)$ belong to $\calL(tG)$ by
\cref{eq:app-ag-function-products}.  Since evaluation preserves
products and $C_L(D,tG)$ is linear, it contains the span of the
corresponding component-wise products.
The assumed divisor inequality gives
$tG\leq R+D-G$, so inclusion of the function spaces and
\cref{thm:app-ag-duality} yield
\begin{equation*}
    \calA^{\star t}\subseteq C_L(D,tG)
       \subseteq C_L(D,R+D-G)=\calA^\perp.
\end{equation*}
The equivalent formulation follows by bilinearity of the inner
product and the definition of the star power.
\end{proof}

In \cref{lem:good-ag-fifteen-multiplication-code}, the
Galois tower supplies divisors over $\FF_{1024}$
satisfying $16G\leq R+D$.  The criterion with $t=15$ therefore
gives the required multiplication property.

\section{Multi-linearity and homogeneity of the $\Phi$ in binary embedding}
\label{appendix:properties-of-binary-encoding}
Here, we prove that the map $\Phi$ defined in
\cref{eq:quintic-trace-form} is multi-linear over $\FF_2$ and that
its value depends only on the set of distinct inputs, regardless of
their order or multiplicities.  We refer to the latter property as
homogeneity in this section.

Recall that $\Phi(x_1,\ldots,x_5)=\Tr(p(x_1,\ldots,x_5))$,
where
\begin{equation}
    p(x_1,\ldots,x_5)\coloneqq
       \prod_{\substack{S\subseteq[5]\\|S|\text{ odd}}}
          \left(\sum_{j\in S}x_j\right).
    \label{eq:app-phi-polynomial}
\end{equation}
All sums and products inside the trace are taken in $\FF_{1024}$, which has characteristic two.
In particular, identical terms cancel in pairs, and $(u+v)^{2^i}=u^{2^i}+v^{2^i}$ for every nonnegative integer $i$.
For all $u,v\in\FF_{1024}$ and $\zeta,\zeta'\in\FF_2$, the absolute trace satisfies
\begin{align}
    \Tr(\zeta u+\zeta' v)
       &=\zeta\Tr(u)+\zeta'\Tr(v),
       \label{eq:app-phi-trace-linearity}\\
    \Tr(u^{2^i})&=\Tr(u)
       \qquad(i\geq0).
       \label{eq:app-phi-trace-frobenius}
\end{align}
The first identity follows by expanding the trace, and the second
follows from $u^{2^{10}}=u$.

\emph{Multi-linearity.}
We prove that $\Phi$ is $\FF_2$-linear in $x_1$; by symmetry,
it is then $\FF_2$-linear in every input.  The polynomial inside
the trace can be written as
\begin{equation}
\begin{aligned}
    p(x_1,\ldots,x_5)
       &=x_2x_3x_4x_5 \cdot 
          \prod_{\substack{S\subseteq\{2,3,4,5\}\\|S|=3}}
             \left(\sum_{j\in S}x_j\right) \\
       &\qquad\cdot x_1\prod_{1<j<k\leq5}(x_1+x_j+x_k) \cdot
          (x_1+x_2+x_3+x_4+x_5)\\
       &=\gamma \cdot x_1 \cdot \prod_{1<j<k\leq5}(x_1+x_j+x_k) \cdot
          (x_1+x_2+x_3+x_4+x_5),
\end{aligned}
    \label{eq:app-phi-factorization}
\end{equation}
where $\gamma$ is the product of the factors without $x_1$.
Identical terms cancel in pairs in characteristic two.  Thus,
\begin{align}
    \Tr(p(x_1,\ldots,x_5))
       &=\begin{aligned}[t]
          &\Tr\!\Bigl[\gamma x_1(x_1+x_2+x_3)(x_1+x_2+x_4)(x_1+x_3+x_4)\\
          &\quad\cdot(x_1+x_2+x_5)(x_1+x_3+x_5)(x_1+x_4+x_5)(x_1+x_2+x_3+x_4+x_5)\Bigr]
          \end{aligned}\notag\\
       &=\begin{aligned}[t]
          &\Tr\!\Bigl[\gamma x_1(x_1+x_2+x_3)(x_1+x_2+x_4)(x_1+x_3+x_4)\\
          &\quad\cdot(x_1+x_2+x_5)\bigl((x_1+x_2+x_5)+x_2+x_3\bigr)\bigl((x_1+x_2+x_5)+x_2+x_4\bigr)\\
          &\quad\cdot\bigl((x_1+x_2+x_5)+x_3+x_4\bigr)\Bigr]
          \end{aligned}\notag\\
       &=\begin{aligned}[t]
          &\Tr\!\Bigl[\gamma(\chi_1x_1+\chi_2x_1^2+\chi_3x_1^3+\chi_4x_1^4)\\
          &\quad\cdot\Bigl(\chi_1(x_1+x_2+x_5)+\chi_2(x_1+x_2+x_5)^2+\chi_3(x_1+x_2+x_5)^3+\chi_4(x_1+x_2+x_5)^4\Bigr)\Bigr],
          \end{aligned}
    \label{eq:app-phi-quartic}
\end{align}
where
\begin{equation}
\begin{aligned}
    \chi_1&=(x_2+x_3)(x_2+x_4)(x_3+x_4),\\
    \chi_2&=x_2^2+x_3^2+x_4^2+x_2x_3+x_2x_4+x_3x_4,\\
    \chi_3&=0,\\
    \chi_4&=1.
\end{aligned}
    \label{eq:app-phi-quartic-coefficients}
\end{equation}
Since $\chi_3=0$, we have
\begin{align}
    \Tr(p(x_1,\ldots,x_5))
       &=\begin{aligned}[t]
          &\Tr\!\Bigl[\gamma(\chi_1x_1+\chi_2x_1^2+\chi_4x_1^4)\\
          &\quad\cdot\bigl(\chi_1(x_1+x_2+x_5)+\chi_2(x_1+x_2+x_5)^2+\chi_4(x_1+x_2+x_5)^4\bigr)\Bigr]
          \end{aligned}\notag\\
       &=\begin{aligned}[t]
          &\Tr\!\Bigl[\gamma(\chi_1x_1+\chi_2x_1^2+\chi_4x_1^4)\\
          &\qquad\cdot\Bigl((\chi_1x_1+\chi_2x_1^2+\chi_4x_1^4) +\chi_1(x_2+x_5)+\chi_2(x_2+x_5)^2
             +\chi_4(x_2+x_5)^4\Bigr)\Bigr]
          \end{aligned}\notag\\
       &=\begin{aligned}[t]
          &\Tr\!\Bigl[\gamma(\chi_1x_1+\chi_2x_1^2+\chi_4x_1^4)^2 + \\
          &\qquad\gamma(\chi_1x_1+\chi_2x_1^2+\chi_4x_1^4) \bigl(\chi_1(x_2+x_5)+\chi_2(x_2+x_5)^2+\chi_4(x_2+x_5)^4\bigr)\Bigr]
          \end{aligned}\notag\\
       &=\begin{aligned}[t]
          &\Tr\!\Bigl[\gamma\bigl(\chi_1^2x_1^2+\chi_2^2x_1^4+\chi_4^2x_1^8 + \\
          &\qquad\gamma(\chi_1x_1+\chi_2x_1^2+\chi_4x_1^4) \bigl(\chi_1(x_2+x_5)+\chi_2(x_2+x_5)^2+\chi_4(x_2+x_5)^4\bigr)\Bigr].
          \end{aligned}
    \label{eq:app-phi-expanded-powers}
\end{align}
Therefore, $p(x_1,\ldots,x_5)$ contains only the powers
$x_1,x_1^2,x_1^4,x_1^8$.  Rewrite its trace as
\begin{equation}
    \Tr(p(x_1,\ldots,x_5))
       =\Tr\!\bigl(
          \chi'_1x_1+\chi'_2x_1^2+\chi'_4x_1^4+\chi'_8x_1^8\bigr),
    \label{eq:app-phi-linearized-polynomial}
\end{equation}
where all factors involving $x_2,\ldots,x_5$ and the coefficients
have been absorbed into $\chi'_1,\chi'_2,\chi'_4,\chi'_8$.
These coefficients are independent of $x_1$.  For any
$\zeta,\zeta'\in\FF_2$, we have
\begin{align}
    &\Tr(p(\zeta x_1+\zeta' x_1',x_2,\ldots,x_5))\notag\\
    &\quad=\begin{aligned}[t]
       &\Tr\!\Bigl[\chi'_1(\zeta x_1+\zeta' x_1')
          +\chi'_2(\zeta x_1+\zeta' x_1')^2+\chi'_4(\zeta x_1+\zeta' x_1')^4
          +\chi'_8(\zeta x_1+\zeta' x_1')^8\Bigr]
       \end{aligned}\notag\\
    &\quad=\begin{aligned}[t]
       &\Tr\!\Bigl[\chi'_1(\zeta x_1+\zeta' x_1')
          +\chi'_2(\zeta x_1^2+\zeta' (x_1')^2)+\chi'_4(\zeta x_1^4+\zeta' (x_1')^4)
          +\chi'_8(\zeta x_1^8+\zeta' (x_1')^8)\Bigr]
       \end{aligned}\notag\\
    &\quad=\zeta\Tr\!\bigl(
       \chi'_1x_1+\chi'_2x_1^2+\chi'_4x_1^4+\chi'_8x_1^8\bigr)+\zeta'\Tr\!\bigl(
       \chi'_1x_1'+\chi'_2(x_1')^2+\chi'_4(x_1')^4+\chi'_8(x_1')^8\bigr)\notag\\
    &\quad=\zeta\Tr(p(x_1,x_2,\ldots,x_5))
       +\zeta'\Tr(p(x_1',x_2,\ldots,x_5)).
    \label{eq:app-phi-polynomial-linearity}
\end{align}
This proves that $\Phi$ is $\FF_2$-linear in $x_1$:
\begin{equation}
    \Phi(\zeta x_1+\zeta' x_1',x_2,\ldots,x_5)
       =\zeta\Phi(x_1,x_2,\ldots,x_5)
          +\zeta'\Phi(x_1',x_2,\ldots,x_5).
    \label{eq:app-phi-linearity}
\end{equation}
By symmetry, $\Phi$ is $\FF_2$-linear in every input.

\emph{Homogeneity.}
The value of $\Phi$ depends only on the set of distinct inputs,
regardless of their order or multiplicities.
For example, the
inputs $x,x,x,y,z$ and $y,y,x,x,z$ yield the same output.
Specifically, for four distinct inputs, homogeneity follows directly from the symmetry.
For the cases of three and two distinct inputs, we need to prove
\begin{align}
    \Phi(x,x,y,y,z)&=\Phi(x,y,y,y,z),
       \label{eq:app-phi-three-distinct}\\
    \Phi(x,y,y,y,y)&=\Phi(x,x,y,y,y)
       \label{eq:app-phi-two-distinct}
\end{align}
for all $x,y,z\in\FF_{1024}$.  

We first prove \cref{eq:app-phi-three-distinct}:
\begin{align}
    \Phi(x,x,y,y,z)
       &=\begin{aligned}[t]
          &\Tr\!\Bigl[x^2y^2z\cdot(2x+y)^2(x+2y)^2
             (2x+z)(2y+z)(x+y+z)^4(2x+2y+z)\Bigr]
          \end{aligned}\notag\\
       &=\Tr\!\bigl(
          x^2y^2z\cdot y^2x^2\cdot z^2\cdot(x^4+y^4+z^4)\cdot z\bigr)
          \notag\\
       &=\Tr\!\bigl(x^4y^4z^4(x^4+y^4+z^4)\bigr)
          \notag\\
       &=\Tr\!\bigl(xyz(x+y+z)\bigr),
          \label{eq:app-phi-xxyyz}\\
    \Phi(x,y,y,y,z)
       &=\begin{aligned}[t]
          &\Tr\!\Bigl[xy^3z\cdot(x+2y)^3(2y+z)^3
             (x+y+z)^3 \cdot 3y\cdot(x+3y+z)\Bigr]
          \end{aligned}\notag\\
       &=\Tr\!\bigl(
          xy^3z\cdot x^3z^3\cdot(x+y+z)^3\cdot y\cdot(x+y+z)\bigr)
          \notag\\
       &=\Tr\!\bigl(x^4y^4z^4(x+y+z)^4\bigr)
          \notag\\
       &=\Tr\!\bigl(xyz(x+y+z)\bigr)
          =\Phi(x,x,y,y,z).
          \label{eq:app-phi-xyyyz}
\end{align}
Setting $z=y$ in the above equations gives
\begin{align}
    \Phi(x,x,y,y,y)
       &=\Tr\!\bigl(xy^2(x+2y)\bigr)
          =\Tr(xy^2x)=\Tr(xy),
          \notag\\
    \Phi(x,y,y,y,y)
       &=\Tr\!\bigl(xy^2(x+2y)\bigr)
          =\Tr(xy)=\Phi(x,x,y,y,y).
          \label{eq:app-phi-two-input-calculation}
\end{align}
In addition, when all inputs are equal, 
\begin{equation*}
    \Phi(x,x,x,x,x) = \Tr(x^2) = \Tr(x),
\end{equation*}
which is used to prove the parity identity in \cref{eq:sigma-trace-identities} in the main text.

\section{Field size required by the AG-code construction}
\label{appendix:required-ag-code-field}

We show that $m=10$ is the smallest extension degree for which the
Galois tower, divisor criterion, and parameter bounds used
in \cref{lem:good-ag-fifteen-multiplication-code} give the
$15$-multiplication property together with positive asymptotic rate and
primal and dual relative-distance bounds of AG codes.  The argument follows the
tower and duality calculations of
\cite[Theorem~3.6 and Remark~3.8]{nguyen2025good}.

Write $q=\ell^2=2^m$, where $\ell=2^{m/2}$ is a power of $2$, so $m$
is even.  At tower level $i$, write $n_i=n_{\calA_i}$.  The tower has
$g_i=1+\frac{n_i}{\ell-1}(1-\frac1{\alpha_i}-\frac1{\beta_i})$,
where $\alpha_i,\beta_i\to\infty$.  Its positive divisors satisfy
$\alpha_i\deg A_i=\beta_i\deg B_i=n_i/(\ell-1)$, and its canonical
divisor is $R_i=(\ell\alpha_i-2)A_i+(\beta_i-2)B_i-D_i$, with
$\deg D_i=n_i$.

For $\calA_i=C_L(D_i,G_i)$, the criterion in
\cref{prop:app-ag-multiplication} and AG-code duality give the
$15$-multiplication property when $16G_i\leq R_i+D_i$.  Taking degrees,
using $\deg R_i=2g_i-2$, yields
$16(\deg G_i+2-2g_i)\leq n_i-30g_i+30$.
Since $g_i/n_i\to1/(\ell-1)$, a positive asymptotic relative
dual-distance bound from \cref{eq:app-ag-dual-distance} requires
$1-30/(\ell-1)>0$, or $\ell>31$.  As $\ell$ is a power of $2$, this
requires $\ell\geq32$ and $m\geq10$.

For these values, choose
\begin{equation*}
    G_i=
      \left\lfloor\frac{\ell\alpha_i-2}{16}\right\rfloor A_i
      +\left\lfloor\frac{\beta_i-2}{16}\right\rfloor B_i.
\end{equation*}
It satisfies $16G_i\leq R_i+D_i$.  The floor inequality
$x-1\leq\lfloor x\rfloor\leq x$ and the degree relations above give
$\deg G_i/n_i\to(\ell+1)/(16(\ell-1))$.  For $\ell>31$, this limit
exceeds $(2g_i-1)/n_i\to2/(\ell-1)$ and is less than $1$.  Thus
$2g_i-1\leq\deg G_i<n_i$ for all sufficiently large $i$.  The
AG-code parameter bounds in
\cite[Lemmas~3.3 and 3.5]{nguyen2025good} now give
\begin{align*}
    \frac{k_{\calA_i}}{n_i}
       &\longrightarrow\frac{\ell-15}{16(\ell-1)},\\
    \liminf_{i\to\infty}\frac{\dist(\calA_i)}{n_i}
       &\geq\frac{15\ell-17}{16(\ell-1)},\\
    \liminf_{i\to\infty}\frac{\dist(\calA_i^\perp)}{n_i}
       &\geq\frac{\ell-31}{16(\ell-1)}.
\end{align*}
All three are positive for every even $m\geq10$.  At the smallest
choice $m=10$, we have $\ell=32$ and
$\lfloor(\ell\alpha_i-2)/16\rfloor=2\alpha_i-1$, recovering the
divisor and the three bounds $17/496$, $463/496$, and $1/496$ in
\cref{lem:good-ag-fifteen-multiplication-code}.

\section{Explicit embeddings and numerical parameters}
\label{app:explicit-parameters}

{We specify an embedding of $\FF_{1024}$ satisfying
\cref{eq:addressable-fivefold-embedding} and derive the corresponding
quantum-code parameters from \cref{thm:direct-good-transversal-T}.}

{Set $m=10$ and $q=2^m=1024$.
Represent $\FF_q$ as $\FF_2[z]/(f_{10}(z))$, with basis
$e_j=\alpha^{j-1}$ for $1\leq j\leq10$, where $\alpha$ is the
residue class of $z$ and $f_{10}$ is the irreducible polynomial}
\begin{equation}
\begin{aligned}
 f_{10}(z)&=z^{10}+z^3+1.
\end{aligned}
\label{eq:embedding-field-polynomials}
\end{equation}
For $\varnothing\ne S\subseteq[m]$, put
$u(S)\coloneqq\sum_{j\in S}2^{j-1}$.  A coefficient vector
$\zeta$ in \cref{eq:addressable-affine-minimum} is represented by
the nonnegative integer
\begin{equation}
 W_q\coloneqq\sum_{\varnothing\ne S\subseteq[m]}
                    \zeta_S2^{u(S)-1},\qquad q=2^m,
 \label{eq:embedding-support-integer}
\end{equation}
where the coefficients are represented by $0$ and $1$.
Thus $\zeta_S$ is bit $u(S)-1$ of $W_q$, with the least significant
bit numbered zero.

{The hexadecimal representation of $W_{1024}$ is obtained
by concatenating the following four rows in order:}
\begin{equation}
\begin{gathered}
\texttt{001082408149D112000308640140880A0E4019468224442CA010140004C90815}\\
\texttt{830A90000215C01A41E232510282034018843451030840029C406AA400882102}\\
\texttt{452100C4306429128580C20095003022002809804A0800D1090C40D024804590}\\
\texttt{8008A56040020102104111050121405003400148102481306161139181280480}.
\end{gathered}
    \label{eq:addressable-244-mask}
\end{equation}

{The triangular construction in
\cref{eq:quintic-subset-recursion} gives a feasible vector of weight
$353$.  A randomized local search in the affine solution space
of Section~\ref{sec:minimum-length-embeddings} applies updates
$\zeta\mapsto\zeta+b$ with $b\in\ker M_{10}$, accepting all weight
reductions and equal-weight updates with probability $1/8$.
The resulting vector, encoded in \cref{eq:addressable-244-mask}, has
weight $244$ and satisfies $M_{10}\zeta=\gamma_{10}$, which certifies
$L_{\min}(10)\leq244$ by \cref{lem:addressable-affine-minimum}.}

{The rank formula in \cref{lem:addressable-affine-minimum}
gives $\dim_{\FF_2}\ker M_{10}=1023-s_{10}=386$.
For this coefficient vector, evaluating
$d_{\mathrm{in}}\coloneqq\min_{x\in\FF_{1024}^\times}\wt(\sigma(x))$
on all $1023$ nonzero field elements gives $d_{\mathrm{in}}=100$.
Thus the image of the specified embedding is a binary $[244,10,100]$
code.}

\begin{corollary}
\label{cor:multiplication-family-parameters}
{With the embedding specified by \cref{eq:addressable-244-mask},
the family in \cref{thm:direct-good-transversal-T} admits fully
addressable transversal $T$ gates and satisfies}
\begin{equation}
    \frac{K_i}{N_i}\longrightarrow\frac1{241804},\qquad
    \liminf_i\frac{D_{X,i}}{N_i}\geq\frac{92500}{241804},\qquad
    \liminf_i\frac{D_{Z,i}}{N_i}\geq\frac1{241804}.
    \label{eq:addressable-short-asymptotic-constants}
\end{equation}
Consequently, $\liminf_i D_i/N_i\geq1/241804$.
\end{corollary}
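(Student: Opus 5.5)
The plan is to specialize \cref{thm:direct-good-transversal-T} to the embedding $\sigma$ encoded by \cref{eq:addressable-244-mask} and then substitute $L=244$ into \cref{eq:direct-quantum-parameters}.

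First, I check that this embedding is admissible. By \cref{lem:addressable-affine-minimum}, any $\zeta$ satisfying $M_{10}\zeta=\gamma_{10}$ yields a map $\sigma=(\ell_S)_{\zeta_S=1}$ that satisfies the fivefold identity \cref{eq:addressable-fivefold-embedding} and is injective. Both properties follow from the fivefold identity alone: repeating inputs gives \cref{eq:sigma-trace-identities}. The identity $M_{10}\zeta=\gamma_{10}$ is a finite computation. It has $s_{10}=637$ equations over $\FF_2$, and the paper records it as verified for this $\zeta$ of weight $244$.

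Next, I rerun the proof of \cref{lem:good-weighted-divisible-code} with this $\sigma$ in place of the triangular one. That proof used only \cref{eq:quintic-embedding-identity}, injectivity, and the parity and trace identities. So the codes $\sigma^{\oplus n_{\calA_i}}(\calA_i)$ have the $4$-multiplication property, with $\dim=10k_{\calA_i}$ and $\dist\geq\dist(\calA_i)$. The proof of \cref{thm:direct-good-transversal-T} uses $\sigma$ in three places:
\begin{itemize}
\item the parity $\wt(\sigma(x))\equiv\Tr(x)=1$, which lets the first $K$ blocks be compressed to single bits;
\item the $4$-multiplication property of the uncompressed code;
\item the fact that every nonzero field element maps to a nonzero block, which gives both distance bounds.
\end{itemize}
Each of these holds for any admissible $\sigma$. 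Therefore \cref{lem:prescribed-coefficients,lem:weighted-divisibility-transversal-T} again give fully addressable transversal $T$ by physical powers of $T$, with $N=L(n_{\calA}-K)$ and $L=244$.

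Finally, I substitute into the parameter bounds, taking $K=\lfloor n_{\calA}/992\rfloor$. This gives $K/N\to 1/(991\cdot244)=1/241804$. From $D_X\geq\dist(\calA)-K$ and \cref{eq:direct-outer-limits}, the numerator tends to at least $(463/496-1/992)\,n_{\calA}=925n_{\calA}/992$. Dividing by $N\approx 244\cdot 991\,n_{\calA}/992$ gives $D_X/N\gtrsim 925/241804$.

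\textbf{Main obstacle.} The corollary claims $92500/241804$ for $D_X$, not $925/241804$, a factor of $100=d_{\mathrm{in}}$. The improvement must come from the inner distance. Every nonzero block $\sigma(u_p)$ with $p\in\overline{S_{\calA}}$ has weight at least $d_{\mathrm{in}}=100$. A logical $X$ word $w=xH_1+v$ corresponds to $u\in\calA$ with at least $\dist(\calA)-K$ nonzero coordinates outside $S_{\calA}$. Hence
\[
\wt(w)\geq d_{\mathrm{in}}\bigl(\dist(\calA)-K\bigr).
\]
This is the step I would write out carefully, as a refinement of the $X$-distance argument in \cref{thm:direct-good-transversal-T}. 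It yields
\[
\liminf_i \frac{D_{X,i}}{N_i}\geq \frac{100\cdot925}{991\cdot244}=\frac{92500}{241804}.
\]
The $D_Z$ bound $\dist(\calA^\perp)-K$ comes from the block-counting argument, which only uses $|J_{\calA}|\leq\wt(z)$. So it stays at $(1/496-1/992)\,n_{\calA}=n_{\calA}/992$, giving $1/241804$. The final statement $\liminf_i D_i/N_i\geq 1/241804$ is the minimum of the two directional bounds.
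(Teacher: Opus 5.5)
Your proposal is correct and follows the paper's route: the paper's proof specializes \cref{eq:direct-quantum-parameters} to $(L,d_{\mathrm{in}})=(244,100)$ and takes the minimum of the two directional bounds. The theorem as stated only gives $D_X/N\gtrsim 925/(991L)$, so your explicit refinement $D_X\geq d_{\mathrm{in}}\bigl(\dist(\calA)-K\bigr)$ is exactly the step the paper's one-line proof leaves implicit in order to justify the factor $100$; it holds because each nonzero $u_p$ with $p\in\overline{S_{\calA}}$ yields a block of weight at least $d_{\mathrm{in}}$.
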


\begin{proof}
Apply
\cref{eq:direct-quantum-parameters}
with $(L,d_{\mathrm{in}})=(244,100)$, and use
$D_i=\min\{D_{X,i},D_{Z,i}\}$.
The claimed logical action follows from \cref{thm:direct-good-transversal-T}.
\end{proof}

\end{document}